\tikzset{->, node distance=2cm, every state/.style={thick, fill=gray!10}, initial text=$$,
}
\newcommand*{\Defeq}{\coloneqq}
\newcommand{\Def}{\Defeq}
\newcommand{\COMMENT}[1]{}
 \newcommand{\Filter}[1]{\,\triangleright\,}
\newcommand{\F}[1]{{\mathbb F}_{#1}}
\newcommand*{\Nat}{\mathbb{N}} 
\newcommand*{\Int}{\mathbb{Z}}
\newcommand*{\Real}{\mathbb{R}}
\newcommand*{\Q}{\mathbb{Q}}
\newcommand*{\Inter}{\,\cap\,}
\newcommand*{\Union}{\,\cup\,}
\newcommand*{\Max}[2]{\mathrm{max}({#1},{#2})}
\newcommand*{\Inverse}[1]{{#1}^{-1}}
\newcommand*{\comment}[1]{}
\renewcommand*{\And}{\,\land\,}
\newcommand*{\Or}{\,\lor\,}
\newcommand*{\Imp}{\,\Rightarrow\,}
\newcommand*{\Implies}{\Imp}
\newcommand*{\Iff}{\,\iff\,}
\newcommand*{\Exists}[2]{(\exists {#1})\,{#2}}
\newcommand*{\Forall}[2]{(\forall {#1})\,{#2}}
\newcommand*{\Cross}{\times}
\newcommand{\Pairing}[2]{\langle {#1},\,{#2}\rangle}
\renewcommand{\S}{{\mathcal{S}}}        % Streams
\newcommand{\R}{{\mathcal{R}}}          % Rational Streams
\newcommand{\K}{{\mathcal{K}}}          % Field K
\newcommand{\Tl}[1]{{#1}'}
\newcommand{\iterTl}[2]{{#1}^{({#2})}}
\newcommand{\LI}[1]{\mathrm{I}({#1})} % Leading index
\newcommand{\PosCone}[1]{{#1}_{+}}  % Positive Cone
\newcommand{\LOR}{{\mathcal{L}}_{or}}            % Language of ordered rings
\newcommand{\TORCF}{\mathcal{T}_{\text{rcf}}}   % Theory of real-closed fields
\newcommand{\RCVR}{\mathcal{T}_{\text{rcvr}}}   % Theory of real closed valuation rings
\newcommand{\TORCR}{\RCVR}
\newcommand{\Poly}[2]{{#1}[{#2}]}
\newcommand{\Rational}[2]{{#1}({#2})}
\newcommand{\Power}[2]{{#1}\llbracket {#2}\rrbracket}
\newcommand{\Laurent}[2]{{#1}(\!({#2})\!)}  % \mkern-3mu
\newcommand{\Span}[1]{\left\langle{#1}\right\rangle}
\newcommand{\lbparen}{%
  \mathopen{%
    \sbox0{$()$}%
    \setlength{\unitlength}{\dimexpr\ht0+\dp0}%
    \raisebox{-\dp0}{%
      \begin{picture}(.32,1)
      \linethickness{\fontdimen8\textfont3}
      \roundcap
      \put(0,0){\raisebox{\depth}{$($}}
      \polyline(0.32,0)(0,0)(0,1)(0.32,1)
      \end{picture}%
    }%
  }%
}
\newcommand{\rbparen}{%
  \mathclose{%
    \sbox0{$()$}%
    \setlength{\unitlength}{\dimexpr\ht0+\dp0}%
    \raisebox{-\dp0}{%
      \begin{picture}(.32,1)
      \linethickness{\fontdimen8\textfont3}
      \roundcap
      \put(-0.08,0){\raisebox{\depth}{$)$}}
      \polyline(0,0)(0.32,0)(0.32,1)(0,1)
      \end{picture}%
    }%
  }%
}
\newcommand{\Ana}[1]{\lbparen\,{#1}\,\rbparen}
\begin{document}
\title{A Decision Method for \\ Elementary Stream Calculus}
%
%\titlerunning{Abbreviated paper title}
% If the paper title is too long for the running head, you can set
% an abbreviated paper title here
%
\author{Harald Ruess\inst{1}
}
\authorrunning{H. Ruess}

% First names are abbreviated in the running head.
% If there are more than two authors, 'et al.' is used.
%
\institute{Entalus Computer Science Labs %\\
      %      2071 Gulf of Mexico Drive \\
      %      Longboat Key, FL, 34228
            \\[2mm]
            \email{harald.ruess@entalus.com}\\[4mm]
            Preprint\\
            $10^{\mathit{th}}$ December 2023
}
\maketitle              % typeset the header of the contribution
\begin{abstract}
The main result is a doubly exponential decision procedure for the first-order equality theory of streams with both arithmetic and control-oriented stream operations.
This stream logic is expressive for elementary problems of stream calculus.

\keywords{Automated Verification \and Formal Methods \and System Design}
\end{abstract}

\section{Introduction}\label{sec:intro}
The principled design of stream-based systems in computer science and control theory heavily relies on solving quantified stream constraints~\cite{gilles1974semantics,kahn1976coroutines,burge1975stream,broy2012specification,broy1986theory,broy2023}\@.
Solving these stream constraints, however, is challenging as quantification over streams is effectively second order.

The monadic second-order logic $MSO(\omega)$~\cite{courcelle2012graph,owre2000integrating} over $\omega$-infinite words, in particular, is an expressive logic for encoding quantified constraints 
over discrete streams by quantification over sets of natural numbers\@. 
The set of models of any $MSO(\omega)$ formula
can be characterized by a finite-state Mealy machine~\cite{buchi1969definability}\@.
This correspondence forms the basis of a non-elementary decision procedure for
$MSO(\omega)$,
since emptiness for finite-state 
Mealy machines is decidable\@. 
Equivalently, the first-order equality theory of streams is
non-elementarily decidable based on the logic-automaton correspondence~\cite{pradic2020some}\@.
% The underlying automata-based decision procedures for $MSO(\omega)$, however, are restricted 
% to finite stream alphabets, and the resulting decision procedure is 
% of {\em non-elementary} complexity.

{\em Main result.} For a given first-order formula $\varphi$ 
in the language of ordered rings with equality, the validity of $\varphi$ in the 
structure of discrete, real-valued streams can be decided using quantifier elimination 
in doubly exponential time.
Definitional extensions demonstrate the expressive power of stream logic for deciding elementary problems in stream calculus~\cite{rutten2019method}\@.

% Here we are describing a decision procedure for first-order 
% equality and inequality constraints  on discrete and real-valued streams. 
In contrast to automata-based procedures for second-order monadic logics, our decision procedure is not restricted to finite alphabets, it has doubly exponential time complexity, and it is amenable to various conservative extensions for encoding typical correctness conditions in system design.
In particular, stream logic is expressive in supporting both arithmetic and more control-oriented operations such as finite stream shifting. 
Such a combination
of arithmetic with control-oriented operations has also been a long-standing challenge in the design of decision procedures for finite streams~\cite{cyrluk1997efficient,moller1998solving}\@.

Our developments are structured as follows.
In Section~\ref{sec:examples} we provide some motivating examples of typical stream constraints 
in {\em stream calculus}~\cite{rutten2019method}\@. 
Section~\ref{sec:streams} summarizes essential facts on streams based on 
their interpretation as formal power series~\cite{niven1969formal}, with the intent of making this exposition largely self-contained\@.
Streams are identified with {\em formal power series} and the superset of streams with finite history prefixes are identified with {\em formal Laurent series}\@. 
Streams are orderable and they are Cauchy complete with respect to the prefix ultrametric.
% \footnote{
% Readers familiar with the basic notions of topology, orderable fields, 
% ultrametric spaces, and
% formal power series may choose to skip Section~\ref{sec:streams}\@.} 
We show in Section~\ref{sec:computable} that streams are a {\em real closed valuation ring} and their extensions with finite histories are a real-closed field. 
The main technical hurdle in these developments is the derivation of an {\em intermediate value property} (IVP) for streams, since streams, as  ordered and complete non-Archimedean domains,
% , as a Cauchy complete non-Archimedean 
% ordered field 
are lacking the {\em least upper bound property}, and therefore the usual
dichotomic procedure for proving IVP does not apply\@.
As a consequence of the real-closedness of streams, the ordered ring (and field) of streams admit quantifier elimination.
The results in Section~\ref{sec:qe} are direct consequences of the quantifier elimination procedures for 
real-closed valuation rings~\cite{cherlin1983real} and
for real-closed ordered fields~\cite{tarski1998decision} together with the doubly exponential bound obtained by Collin's~\cite{collins1975quantifier} procedure in the case of real-closed ordered fields\@.
In Section~\ref{sec:extensions}\@,
we demonstrate the application of decidable stream logic to the 
analysis of stream circuits by conservative extensions for the shift operation, constants for rational and automatic streams, and stream projection.
We conclude with some final remarks in Section~\ref{sec:conclusions}\@.

\begin{figure}[t]

\begin{center}
    \begin{tikzcd}
    & h_1    \arrow[d, mapsto]
    & D_1        \arrow[l]
    & h_2    \arrow[l, mapsto]
    &
    \\
       z    \arrow[r, mapsto]
    & A  \arrow[r]
    & h_3 \arrow[r, mapsto]
    & C      \arrow[u]   
    &  y \arrow[l, leftarrow] 
   \end{tikzcd}
   \end{center}
    \caption{Finite Stream Circuit.}
    \label{fig:example.circuit}
\end{figure}
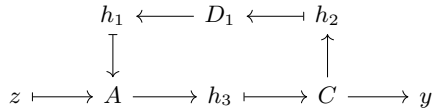

\section{Examples}\label{sec:examples}

We are demonstrating the r\^ole of first-order stream logic for encoding some elementary problems from stream calculus.

\paragraph{Observational Equivalence.}
Two stream processors $T_1$, $T_2$ are shown to be {\em observationally equivalent} by demonstrating the validity of the following first-order 
formula\@.  
\begin{example}[Observational Equivalence]\label{ex1}
  $$
       \Forall{x, y_1, y_2}
           T_1(x, y_1) \And T_2(x, y_2) \Implies y_1 = y_2 % \mbox{\@.}
    $$
\end{example}
Hereby, the logical variables $x$, $y_1$, and $y_2$ are interpreted over (discrete) real-valued streams, 
and $T_i(x, y_i)$, for $i = 1, 2$, are binary predicates for defining the possible output streams $y_i$ of processor $T_i$ on input stream $x$\@.

In {\em stream calculus}~\cite{rutten2019method}, the relations $T_i(x, y_i)$ are typically of the form $y_i = f \cdot x$, where the {\em transfer function} $f$ is a {\em rational stream}, and the output stream $y_i$ is obtained by stream convolution of $f$ with the input stream $x$\@.
These equality specifications are expressive for all well-formed stream circuits with loops~\cite{rutten2008rational}\@. 

\paragraph{Functionality.}
A stream processor $T$ is shown to be {\em functional} if the following first-order stream formula with one quantifier alternation holds. 
\begin{example}[Functionality]\label{ex2}
       $$(\forall x)(\exists y)\, T(x, y)
           \And (\forall z)\, z \neq y \Implies \lnot T(x, z) % \mbox{\@.}
       $$
\end{example}
\paragraph{Non-Interference.}
We are now considering streams of system outputs, which are considered to be  partitioned into a {\em low} and a {\em high} security part. 
In these cases, a stream processor $T$ has a {\em non-interference}~\cite{goguen1982security,mccullough1988noninterference,mclean1994general} property if executing $T$
from indistinguishable low outputs results in 
indistinguishable low outputs at every step. 
\begin{example}[Non-Interference]\label{ex3}
       $$\Forall{x, y_1, y_1}{T(x, y_1) \And T(x, y_2) \Implies \mathit{hd}{(y_1)} =_L \mathit{hd}{(y_2)} \Implies y_1 =_L y_2} % \mbox{\@,}
       $$
\end{example}
Hereby, $\mathit{hd}{(y_i)}$ for $i = 1, 2$ denote initial values, and
$y_1 =_L y_2$ is assumed to hold if and only if the {\em low} parts (for example, projections) of  $y_1$ and $y_2$ are equal.
% Similarly, $y_1 =_L y_2$ holds on streams if the respective projections to the low parts are equal.
These non-interference properties are prominent examples of a larger class of {\em hyperproperties}~\cite{clarkson2010hyperproperties} for comparing two or more system traces. 

\paragraph{Stream Circuits.}
We consider some typical design steps for the
stream circuit in 
Figure~\ref{fig:example.circuit}\@. 
At moment $0$ this circuit inputs the first value $z_0$ at its input end.
The present value $0$ of the register $D_1$ 
is added by $A$ to this and the result 
$y_0 = z_0 + 0 = z_0$ is the first value to be output.
At the same time, this value $z_0$ is copied (by $C$) and 
stored as the new value of the register $D_1$\@. 
The next step consists of inputting the value $z_1$, 
adding the present value of the register, $z_0$, 
to it and outputting the resulting value $y_1 = z_0 + z_1$\@.
At the same time, this value is copied and stored as the new value of the register.
The next step will input $z_2$ and output the value $y_2 = z_0 + z_1 + z_2$; and so on.
The stream transformer for the circuit in Figure~\ref{fig:example.circuit} is written as
   $y = (1, 1, 1, \ldots)\; \cdot\; z$\@,
whereby $'\cdot'$ denotes discrete stream convolution.
% \footnote{
% A definition of $X$ arithmetic stream operations can be found in Section~\ref{sec:streams}\@.} 
For verifying this statement about the stream circuit in 
Figure~\ref{fig:example.circuit} we might want to prompt a decision procedure for stream logic as follows. 
\begin{example}[Analysis]\label{ex4}
   \begin{align*}
   & (\forall z, y, h_1, h_2, h_3) \\ 
   & ~~~h_1 = D_1(h_2) \And
        h_3 = A(z, h_1) \And
        h_2 = C(h_3) \And
        y = h_3 \\
   & ~~~~~~ \Implies y = (1, 1, 1, \ldots) \cdot z
   \end{align*}
\end{example}
Hereby, $D_1(h_2) \Defeq X \cdot h_2$,\footnote{
$X$ might be viewed as an operation for right-shifting a stream by padding it with a leading $0$\@.
} $A(z, h_1) \Defeq z + h_1$, 
and $C(h_3) \Defeq h_3$, and the {\em rational streams} $X$ and
$(1, 1, \ldots)$ are considered to be constant symbols in the logic.
One may synthesize the transfer function
by constructing explicit witnesses for existentially quantified variables
in the underlying proof procedure. 
%Quantifier elimination, for instance, returns an equivalent projected constraint, which does not contain the existentially-quantified variable $u$ below\@. 
\begin{example}[Synthesis]\label{ex5}
   \begin{align*}
   & (\forall z, y, h_1, h_2, h_3) \\
   & ~~~h_1 = D_1(h_2) \And
        h_3 = A(z, h_1) \And
        h_2 = C(h_3) \And
        y = h_3 \\
   & ~~~~~~ \Implies (\exists u) y = u \cdot z
   \end{align*}
\end{example}

\section{Streams}\label{sec:streams}

Let $(\K, +, \cdot, \leq)$ be a totally ordered field. 
A $\K$-valued {\em stream} is an infinite sequence $(a_i)_{i \in \Nat}$ with  $a_i \in \K$\@.
Depending on the application context, streams are also referred to as discrete streams or signals, 
$\omega$-streams, $\omega$-sequences, or $\omega$-words\@.
The {\em generating function}~\cite{charalambides2018enumerative} of a stream is a
{\em formal power series}
   \begin{align}
   \sum_{i\in\Nat} a_i X^i
  \end{align}
in the {\em indefinite} $X$\@. 
These power series are {\em formal} as, at least in the algebraic view, 
the symbol $X$ is not being instantiated and there is no notion of convergence. 
We call $a_i$ the {\em coefficient} of $X^i$, and
the set of formal power series with coefficients in $\K$ is 
denoted by $\Power{\K}{X}$\@.
We also write $f_i$ for the coefficient of $X^i$ in the formal power series $f$\@. 
A {\em polynomial} in $\Poly{\K}{X}$ of degree $d \in \Nat$ is a formal power series $f$ which is {\em dull}, that is $f_d \neq 0$ and $f_i = 0$ for all $i > d$\@. 
For the one-to-one correspondence between streams and formal power series, we use these notions interchangeably. 
% Likewise, {\em finite streams} are identified with % corresponding polynomials in $\Poly{\K}{X}$\@. 
%
Now, addition of streams $f, g \in \Power{\K}{X}$ is pointwise, and streams are multiplied by means of discrete convolution.
   \begin{align}  \label{add}
       f + g           &\Defeq   \sum_{i\in\Nat}  (f_i + g_i) X^i \\  \label{multiply}
       f \cdot g          &\Defeq   \sum_{i \in\Nat}  (\sum_{j = 0}^i f_j g_{i-j}) X^i % \mbox{\@.}
   \end{align}
$(\Power{\K}{X}, +, \cdot)$ is a principal ideal domain with the ideal $(X) = X\cdot\Power{\K}{X}$ the only non-zero maximal ideal.
The field $\K$ is embedded in the polynomial ring $\Poly{\K}{X}$, 
which itself is 
% (injectively and homomorphically) 
embedded in $\Power{\K}{X}$\@.
Moreover, the set of {\em rational functions} $\Rational{\K}{X}$ is defined as the fraction field of the polynomials $\Poly{\K}{X}$\@. 
% $\Power{\K}{X}$ and  $\Rational{\K}{X}$ are incomparable in that
Neither $\Power{\K}{X}$ nor $\Rational{\K}{X}$ contains the other. 
%
% The {\em fundamental stream identity}~\cite{rutten2001elements} states that for every $f \in \Power{\K}{X}$ there exists unique $k \in K$ and $f' \in \Power{\K}{X}$ such that
%  \begin{align}\label{fundamental.stream.identity}
%    f &= \overline{k} + X \cdot f'\mbox{\@,}
%  \end{align}
%where $\overline{k}$ is the stream embedding of $k$ in $\Power{K}{X}$\@.
%Now, $k$ and $f'$ in the identity~
%(\ref{fundamental.stream.identity}) are said to be the {\em head} and the {\em tail} of stream $f$, respectively\@.
%
The {\em multiplicative inverse} $f^{-1}$, for $f \in \Power{\K}{X}$, exists if only if $f_0 \neq 0$\@.
We also write the quotient $\nicefrac{f}{g}$ instead of $f \cdot g^{-1}$\@.
\begin{example}[Rational Streams~\cite{rutten2008rational}]\label{ex:rats}
A stream in $\Power{\Real}{X}$ is {\em rational} if it can be expressed as a quotient $\nicefrac{p}{q}$ of polynomial streams $p, q \in \Poly{\Real}{X}$ such that $g_0 \neq 0$\@.
    \begin{align*}
    \nicefrac{1}{(1 - X)} &= (1, 1, 1, 1, \ldots) & \\
    \nicefrac{1}{(1 - X)^2} &= (1, 2, 3, 4, \ldots) & \\
    \nicefrac{1}{(1 - rX)} &= (1, r, r^2, r^3, \ldots) &\text{(for  $r \in \Real$)}\\
    \nicefrac{X}{(1 - X - X^2)} &= (0, 1, 1, 2, 3, 5, 8, \ldots) &\text{(Fibonacci)}  % \\
  %  \nicefrac{2}{(1 + \sqrt{1 - 4X})} &= (1, 1, 2, 5, 14, \ldots) &\text{(Catalan)} 
    \end{align*}
The Fibonacci recurrence
$a_0 = 0$, $a_1 = 0$, and $a_{n} = a_{n-1} + a_{n-2}$ for all $n \geq 2$, for instance, is derived by equating corresponding coefficients on both sides of the equality 
     % \begin{align*}
       $ X 
     =  (1 - X - X^2) \cdot \sum_{k = 0}^\infty X^k
     =  \sum_{k = 0}^\infty (\sum_{j = 0}^k a_j) X^k$\@.
     % \end{align*}
% The square root of a stream is defined, for instance, in~\cite{rutten2001elements}\@.
\end{example}
Rational streams are a subring of the formal power series $\Power{\Real}{X}$\@, and ultimately periodic 
streams such as $\omega$-words are a special 
case of rational streams \cite{rutten2008rational}\@.
Stream circuits as in Figure~\ref{fig:example.circuit} are expressive in representing any rational stream. 
% These circuits can systematically be analyzed by solving the underlying system of stream equations. 
\begin{example}[\cite{rutten2002streams}]~\label{ex:rational.stream.transformers}
Using  defining equations for $D_1$ (the unit delay register), $A$ (addition of two streams), and $C$ (copying of a stream)
we obtain 
from the stream circuit in Figure~\ref{fig:example.circuit} 
a corresponding system of defining equations 
     $
        h_1 = X \cdot h_2,\,
        h_3 = z + h_1,\,
        h_2 = h_3,\,
        y = h_3
     $\@.
Back substitution for the intermediate streams $h_3$, $h_1$, and $h_2$, in this order, yields an equational constraint 
     $y = z + (X \cdot y)$,
     which is equivalent to
     $ y = \nicefrac{1}{(1 - X)} \cdot z$\@.
Now, $y = (\sum_{i = 0}^k z_i)_{k\in \Nat}$
as a result of the identity for $\nicefrac{1}{(1 - X)}$ in Example~\ref{ex:rats}\@. 
\end{example}

% \begin{example}[Not all streams are rational]
% As a consequence of (\cite{rutten2008rational}, Theorem 5.4), 
% there is no quotient $\nicefrac{p}{q}$ of 
% polynomials $p, q \in \Poly{\Real}{X}$, such that $q$ has a nonzero head,
%for characterizing the stream
%$(1, 1, 0, 1, 0, 0, 1, 0, 0, 0, 1, 0, 0, 0, 0, %\ldots)$\@. 
%\end{example}
\noindent
\begin{remark}
The ring of rational streams~\cite{rutten2008rational} is substantially different 
from the field $\Rational{\Real}{X}$ of {\em rational functions}\@.
For instance, the inverse $\nicefrac{1}{X}$ of the shift stream $X$ is not a rational stream, and 
it is not even a formal power series.
But it is a rational function in $\Rational{\Real}{X}$\@.
\end{remark}
%\begin{remark}
%, and, 
% since $\R$ is a field,
% $\Rational{\Real}{X} \subsetneq \Laurent{\Real}{X}$\@. 
% \end{remark}
The field $\Laurent{\K}{X}$ of {\em formal Laurent series} is the fraction field of the 
formal power series $\Power{\K}{X}$\@.
Elements of $\Laurent{\K}{X}$ are of the form
   \begin{align}
   \sum_{i = -n}^{\infty} a_i X^i\mbox{\@,} 
   \end{align}
for $n \in \Nat$ and $a_i \in \K$\@.
They may therefore be thought of as streams
which are preceded by a finite, possibly empty,
history, which may be used to "rewind computations“.
In fact, every formal Laurent series is of the form $X^{-n} \cdot f$, for some $n \in \Nat$ and for $f \in \Power{\K}{X}$ a formal power series.

% To see that formal Laurent series indeed form a field, the only field axiom that offers any difficulty is the existence of multiplicative
%inverses.
%But if $0 \neq f \in \Laurent{\Real}{x}$ with lead coefficient $a_{N} \neq 0$,
%then $f = X^{N} \cdot g$, where $g$ is an ordinary power series with nonzero head coefficient.
%Hence $g^{-1} \in \Power{\Real}{X}$ exists, 
%so $f^{-1} \Defeq X^{-N} \cdot g^{-1} \in \Laurent{\Real}{x}$ is 
%the desired multiplicative inverse. 

\begin{remark} 
The rational functions in $\Rational{\K}{X}$ consist of
those formal Laurent series $\sum_{k=-n}^\infty a_k X^k$, 
for $k \in \Nat$, for which the sequence $(a_k)_{k\in\Nat}$ satisfies a linear recurrence % Easy to prove
relation.\footnote{
 That is, there is an $m \in \Nat$ and
 $c_0, \ldots, c_m \in \K$, not all zero, such that
 $c_0 a_n + \ldots + c_m a_{n + m} = 0$ for all $n\in \Nat$\@. 
} 
As a consequence, every formal Laurent series in $\Laurent{\F{q}}{X}$, for 
$\F{q}$ a finite field, is a
rational function in $\Rational{\F{q}}{X}$ 
if and only if its sequence $(a_i)_{i\in\Nat}$ of coefficients is eventually periodic.
The stream $(1, 1, 0, 1, 0, 0, 1, 0, 0, 0, 1, 0, 0, 0, 0, \ldots)$ therefore is 
not expressible as a rational function. 
\end{remark}
The value $\LI{f}$ of a formal Laurent series $f$ is the minimal index $k\in\Int$ with $f_k \neq 0$\@.
In this case, $f_k$ is the {\em lead coefficient} of $f$\@. 
%
% \paragraph{Total Ordering.}
Now, the set $\Laurent{\K}{X}$ of formal Laurent series is {\em orderable} (see Appendix~\ref{app:orderable}) by 
the {\em positive cone} $\PosCone{\Laurent{\K}{X}}$\@ of 
formal Laurent series with positive lead coefficient.
This set % of {\em positive} Laurent series 
determines a strict ordering
$ f < g$, for $f, g \in \Laurent{\K}{X}$, which is defined to hold if and only if $g - f \in \PosCone{\Laurent{\K}{X}}$,
and a total ordering $f \leq g$,  which holds 
if and only if $f < g$ or $f = g$\@. 
\begin{proposition}\label{laurent-totally-ordererd}
$(\Laurent{\K}{X}, +, \cdot, \leq)$
is a totally ordered field.
\end{proposition}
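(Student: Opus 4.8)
The plan is to exhibit $\PosCone{\Laurent{\K}{X}}$ as a legitimate positive cone for the field $\Laurent{\K}{X}$, from which every totally-ordered-field axiom follows formally. Write $P \Defeq \PosCone{\Laurent{\K}{X}}$. Since the text already records that $\Laurent{\K}{X}$ is a field and that every nonzero element has a well-defined minimal nonzero index $\LI{f}$ (the index set is bounded below by $-n$, so a least element exists and the lead coefficient $f_{\LI{f}} \in \K$ is well defined), it suffices to verify the three cone axioms: trichotomy, closure under addition, and closure under multiplication.

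For trichotomy, negating $f$ negates every coefficient, hence flips the sign of the lead coefficient $f_{\LI{f}}$; since $\K$ is totally ordered, exactly one of \emph{lead coefficient positive}, \emph{$f = 0$}, \emph{lead coefficient negative} holds, which says exactly one of $f \in P$, $f = 0$, $-f \in P$ holds. This makes $\leq$ total and, together with the translation identity $(g + h) - (f + h) = g - f$, already yields compatibility of the order with addition.

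The heart of the argument is computing the lead term of sums and products. For $f, g \in P$ with $m = \LI{f}$ and $n = \LI{g}$, set $k = \Min{m}{n}$; the coefficient of $X^k$ in $f + g$ equals $f_m$ when $m < n$, equals $g_n$ when $n < m$, and equals $f_m + g_m$ when $m = n$. In every case this is a sum of positive elements of $\K$, so in the equal-index case the decisive point is that positivity in $\K$ rules out cancellation, giving $f_m + g_m > 0$ and in particular nonzero. Hence $f + g$ has lead index $k$ with positive lead coefficient, so $f + g \in P$. For the product, the convolution formula~(\ref{multiply}) shows the coefficient of $X^i$ vanishes for $i < m + n$ (each summand $f_j g_{i-j}$ forces $j \geq m$ and $i - j \geq n$), while the coefficient of $X^{m+n}$ is exactly $f_m \cdot g_n > 0$ because $\K$ is an ordered domain; thus $f \cdot g \in P$.

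The main obstacle — though a mild one — is precisely this no-cancellation claim for the leading coefficients: it is what forces the use of $\K$ being a \emph{totally ordered} field rather than an arbitrary field, since both the equal-index sum and the product could a priori cancel in a field lacking an order. Once the three cone axioms are established, transitivity of $<$ follows from additive closure, irreflexivity from $0 \notin P$, and multiplicative compatibility from multiplicative closure, completing the proof that $(\Laurent{\K}{X}, +, \cdot, \leq)$ is a totally ordered field.
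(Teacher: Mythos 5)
Your proof is correct and follows exactly the route the paper intends: the paper states this proposition without an explicit proof, relying on the positive cone $\PosCone{\Laurent{\K}{X}}$ of series with positive lead coefficient together with the general orderability framework of Appendix~\ref{app:orderable}. You simply fill in the cone-axiom verification (trichotomy, and closure of the cone under addition and multiplication via the lead-term computations) that the paper leaves implicit, so there is no substantive difference in approach.
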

As a consequence of Proposition~\ref{laurent-totally-ordererd}\@,
$\Laurent{\K}{X}$ is {\em formally real} 
($-1$ cannot be written as a sum of nonzero squares in $\Laurent{\K}{X}$)\@,
$\Laurent{\K}{X}$ is not algebraically closed (for example, the polynomial $X^2 + 1$ has no root)\@, and
$\Laurent{\K}{X}$ is of characteristic $0$\@
($0$ cannot be written as a sum of $1$s)\@.

% \begin{remark}\label{stream.subset}
% The set of formal power series might be "recovered" from its superset of formal Laurent series by the identity
 %  \begin{align}
 %      \Power{\K}{X} &=  \Setc{f \in \Laurent{\K}{X}}{f \geq 0}\mbox{\@.}
 %  \end{align}
%\end{remark}

\begin{figure}[t]
\begin{center}
    \begin{tikzcd}
     \Poly{\K}{X}    \arrow[r, hookrightarrow, "/"] \arrow[d, hookrightarrow, "*" ]
    & \Rational{\K}{X} \arrow[d, hookrightarrow, "*"]
    \\
      \Power{\K}{X}    \arrow[r, hookrightarrow, "/"]
    & \Laurent{\K}{X}
   \end{tikzcd}
   \end{center}
    \caption{Commuting Stream Embeddings ($\K$ is a field or at least an integral domain, $*$ denotes completion for valuation $|.|$, and $/$ the fraction field construction.}
    \label{fig:constructions}
\end{figure}

% \paragraph{Order Topology.}
%\begin{proposition}%\label{prop:polynomials.are.continuous}
%Every polynomial $P(Y) \in \Laurent{\K}{X}[Y]$ over a single indeterminate $Y$ and with coefficients in $\Laurent{\K}{X}$
%is continuous in the order topology on $(\Laurent{\K}{X}, \leq)$\@. 
%\end{proposition}

% \paragraph{Non-Archimedean.} 
The Archimedean property (see~\cite{schechter1996handbook}) fails to hold
for $\Laurent{\K}{X}$, because 
$X  \not< 1 + 1 + \ldots + 1$, no matter how many
$1$’s we add together.
% Indeed, $\leq$ is the only order in which $X$ is positive infinitesimal over $\Laurent{\K}{X}$\@.
Analogously, $\nicefrac{1}{(1 + X)}$ is positive but {\em infinitesimal} over $\Laurent{\K}{X}$   
% 1 / (1 - X)  ???
in the sense that $0 < \nicefrac{1}{(1 + X)} < [\nicefrac{1}{n}]$ for each $n \Defeq 1 + \ldots + 1$\@. 
As a non-Archimedean,  Cauchy complete, and totally order field,
$(\Laurent{\K}{X}, \leq)$ lacks the {\em least upper bound property}\@.
The total order $\leq$ on streams is also dense, and therefore it is not a linear continuum.  

% \paragraph{Ultrametric.}
The function $v: \Laurent{\K}{X} \to \Int \Union \{\infty \}$ with $v(0) \Defeq \infty$ and $v(f) \Defeq \LI{f}$, for $f \neq 0$, is a  (normalized) {\em valuation} on $\Laurent{\K}{X}$\@.
% \begin{proposition}[Valuation]\label{prop:valuation}
% \begin{enumerate}
%    \item $v(f) = \infty$ iff $f = 0$\@,
%    \item $v(f \cdot g) = v(f) + v(g)$\@,
%    \item $v(f + g) \geq \Min{v(f)}{v(g)}$\@. 
% \end{enumerate}
% % for all $f, g \in \Laurent{\K}{X}$\@. 
% \end{proposition}
% \noindent
% As direct consequences, (1) $v(1) = 0$, 
% (2) $v(r) = 0$ for $r \in \Laurent{\K}{X}$ a root of unity,\footnote{
% $r$ is a root of unity if and only there exists a positive $d \in \Nat$ such that $r^d = 1$\@.
% }, 
% (3) $v(f^{-1}) = -v(f)$ for all $f \in \Laurent{\K}{X}^*$, 
% and, (4) if $v(f) \neq v(g)$ then
% $v(f + g) = \Min{v(f)}{v(g)}$\@. 
% % 
From the valuation $v$ one obtains,
using the convention $2^{\infty} \Defeq 0$,
the absolute value function
   $|.|: \Laurent{\K}{X} \to \Real^{\geq 0}$ 
by setting
   \begin{align}\label{def:absolute.value}
       |f| &\Defeq  2^{-v(f)}\mbox{\@.}
   \end{align}
By construction, $|.|$ is the {\em non-Archimedean absolute value} on $\Laurent{\K}{X}$
corresponding to the valuation $v$\@~\cite{neukirch2013algebraic}\@. 
%\begin{proposition}[Non-Archimedean Absolute Value]\label{prop:absolute-value}
%\begin{enumerate}
%\item $|f| = 0$ if and only if $f = 0$\@,
 %   \item $|f \cdot g| = |f| \cdot |g|$\@,
%    \item $|f + g| \leq \Max{|f|}{|g|}$\@. 
%\end{enumerate}
%\end{proposition}
%\noindent
% We have some basic properties that can be deduced directly from the definition of $|.|$
% and Proposition~\ref{prop:absolute-value}\@.
% In particular, 
% (1) $|1| = 1$, 
% (2) $|r| = 1$ for all roots of unity $r \in \Laurent{\K}{X}$,
% (3) $|f^{-1}| = |f|^{-1}$,
% (4) $||f| - |g|| \leq |f - g|$, 
% and
% (5) $|f + g| = \Max{|f|}{|g|}$
% whenever $|f| \neq |g|$\@. 
%
The induced metric $d: \Laurent{\K}{X} \Cross \Laurent{K}{X} \to \Real^{\geq 0}$ with
    \begin{align}
        d(f, g) &\Defeq |f - g|
    \end{align}
measures the distance between $f$ and $g$ 
in terms of the longest common prefix. 
By construction,  
the {\em strong triangle inequality}
   %\begin{align}\label{strengthened.triangle}
   $d(f, h) \leq \Max{d(f,g)}{d(g, h)}$\@.
    %\end{align}
holds for all $f, g, h \in \Laurent{\K}{X}$, 
and therefore $d$ is ultrametric.
% This inequality implies the {\em isosceles triangle principle}
%  \begin{align}\label{isosceles-triangle}
%       d(f, h) &= \Max{d(f, g)}{d(g, h)} \textrm{   ~~when~~ } d(f, g) \neq d(g, h)\mbox{\@.}
%   \end{align}
\begin{proposition}\label{prop:ultrametric}
$(\Laurent{\K}{X}, d)$ is an ultrametric space. 
\end{proposition}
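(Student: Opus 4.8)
The plan is to verify directly that $d$ satisfies the three defining properties of an ultrametric, reducing each to an elementary property of the valuation $v$. Since $d(f,g) = |f-g| = 2^{-v(f-g)}$, everything will follow from how $v$ behaves under negation and addition, together with the convention $2^{-\infty} = 0$ that forces $|0| = 2^{-v(0)} = 0$.

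First I would record the three facts about $v$ that I will use: (i) $v(f) = \infty$ if and only if $f = 0$, which is immediate from the definition of $\LI{f}$ as the least index of a nonzero coefficient; (ii) $v(-f) = v(f)$, since negation does not change which coefficients are zero; and (iii) the non-Archimedean inequality $v(f+g) \geq \Min{v(f)}{v(g)}$. Property (iii) is the only one requiring a short argument: if nonzero $f$ and $g$ have lead indices $m$ and $n$, then every coefficient of $f$ and of $g$ with index below $\Min{m}{n}$ vanishes, hence so does the corresponding coefficient of $f+g$, forcing $v(f+g) \geq \Min{m}{n}$.

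With these in hand, the metric axioms fall out. Non-negativity and the identity of indiscernibles follow from (i): $d(f,g) = 2^{-v(f-g)} \geq 0$, and $d(f,g) = 0$ iff $v(f-g) = \infty$ iff $f-g = 0$ iff $f = g$. Symmetry follows from (ii): $d(f,g) = |f-g| = |-(g-f)| = |g-f| = d(g,f)$. Finally, the strong triangle inequality is obtained from (iii) by applying the order-reversing map $x \mapsto 2^{-x}$: writing $f-h = (f-g) + (g-h)$, we obtain $v(f-h) \geq \Min{v(f-g)}{v(g-h)}$, and hence $d(f,h) = 2^{-v(f-h)} \leq 2^{-\Min{v(f-g)}{v(g-h)}} = \Max{d(f,g)}{d(g,h)}$.

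The main (and only genuine) obstacle is property (iii); everything else is formal bookkeeping once the valuation's behaviour on sums is pinned down. I would take care to handle the degenerate cases cleanly --- for instance when one or both of $f-g$, $g-h$ is zero --- by using the extended arithmetic on $\Int \Union \{\infty\}$ and the convention $2^{-\infty} = 0$, so that the strong triangle inequality holds uniformly without any case split.
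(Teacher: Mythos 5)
Your proposal is correct and takes essentially the same route as the paper: the paper obtains the strong triangle inequality ``by construction'' from the fact that $|\cdot|$ is the non-Archimedean absolute value induced by the valuation $v$, which is exactly the content of your property (iii), $v(f+g) \geq \Min{v(f)}{v(g)}$. Your write-up simply spells out the elementary valuation facts (and the degenerate cases via $2^{-\infty} = 0$) that the paper delegates to the cited standard theory.
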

% \begin{remark}
% The topology induced by the metric $d$ is independent of the specific choice of the exponential base $2$ in the defining identity (\ref{def:absolute.value}) for $|.|$\@. 
% This metric-induced topology also coincides with the order topology 
% for $\leq$\@. % Need to show this....
% A basic open set in the product topology (and also in the order topology and the topology induced by $d$) is of the form $U(w)$, for $w$ a finite word on $\K$, and consists of
% all streams with $w$ as a prefix.
% \end{remark}

\begin{example}
The scaled identity function $I_f(x) \Defeq f \cdot x$, for $f \neq 0$, is
uniformly continuous in the topology induced by the metric $d$\@.\footnote{
The topology induced by the order $\leq$ on stream is identical to the topology induced by the prefix metric $d$\@. 
} 
For given $\varepsilon > 0$, let $\delta \Defeq \nicefrac{\varepsilon}{|f|}$\@.
Now,  $d(x, y) < \delta$ implies
$d(fx, fy) = |f|\,d(x, y) < |f|\delta = \varepsilon$ for all $x, y \in \K$\@.
\end{example}

\begin{proposition}\label{prop:add-mult-continuous}
Both addition and multiplication of formal Laurent series
in $\Laurent{\K}{X}$ are continuous in the topology 
induced by the prefix metric $d$\@. 
\end{proposition}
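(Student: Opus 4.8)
The plan is to reduce both continuity claims to two algebraic properties of the non-Archimedean absolute value $|.|$ on $\Laurent{\K}{X}$: its multiplicativity $|f \cdot g| = |f|\,|g|$, which follows from $v(f \cdot g) = v(f) + v(g)$ since $v$ is a valuation, and the strong triangle inequality $|f + g| \leq \Max{|f|}{|g|}$ supplied by Proposition~\ref{prop:ultrametric}. I would equip the product $\Laurent{\K}{X} \Cross \Laurent{\K}{X}$ with the metric $\Max{d(f_1, f_2)}{d(g_1, g_2)}$, so that continuity of a binary operation becomes a standard $\varepsilon$-$\delta$ estimate on differences of outputs in terms of differences of inputs.

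First I would treat addition, which I expect to be not merely continuous but non-expansive, and hence uniformly continuous. Writing $(f_1 + g_1) - (f_2 + g_2) = (f_1 - f_2) + (g_1 - g_2)$ and applying the strong triangle inequality gives $d(f_1 + g_1,\, f_2 + g_2) \leq \Max{d(f_1, f_2)}{d(g_1, g_2)}$, so that the single choice $\delta \Defeq \varepsilon$ works uniformly across the whole space, with no dependence on the base point.

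Next I would handle multiplication by localizing at an arbitrary point $(f_0, g_0)$ and using the \emph{add-and-subtract} decomposition $f \cdot g - f_0 \cdot g_0 = f \cdot (g - g_0) + (f - f_0) \cdot g_0$. Taking absolute values and combining multiplicativity with the strong triangle inequality yields $d(f g,\, f_0 g_0) \leq \Max{|f|\, d(g, g_0)}{d(f, f_0)\, |g_0|}$. The stray factor $|f|$ is controlled by the ultrametric bound $|f| \leq \Max{|f_0|}{d(f, f_0)}$, so once $d(f, f_0) < \delta \leq 1$ it is dominated by $M \Defeq \Max{\Max{|f_0|}{|g_0|}}{1}$. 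Choosing $\delta \Defeq \Min{1}{\nicefrac{\varepsilon}{M}}$ then forces $d(f g,\, f_0 g_0) < \varepsilon$, establishing continuity at $(f_0, g_0)$ and, as the point was arbitrary, on all of $\Laurent{\K}{X} \Cross \Laurent{\K}{X}$.

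The main obstacle — and the reason multiplication, unlike addition, is only continuous rather than uniformly continuous on the whole space — is that the bounding constant $M$ depends on the base point through $|f_0|$ and $|g_0|$, and these factors are unbounded on this non-Archimedean field, so no single $\delta$ can serve globally. The hard part is therefore precisely the pointwise localization, and it is here that the \emph{multiplicativity} $|f \cdot g| = |f|\,|g|$ does the essential work, in contrast to the merely sub-additive triangle law that suffices for addition.
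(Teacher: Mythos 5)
Your proof is correct. Note that the paper itself states Proposition~\ref{prop:add-mult-continuous} without any proof, so there is nothing to compare against line by line; what you have written supplies exactly the missing argument, and it is the natural one given the tools the paper sets up immediately beforehand: the strong triangle inequality (Proposition~\ref{prop:ultrametric}) and the multiplicativity $|f \cdot g| = |f|\,|g|$, which the paper itself already exploits in the preceding Example via $d(fx, fy) = |f|\,d(x,y)$. Your two estimates are sound: addition is non-expansive for the max metric on the product, and for multiplication the decomposition $fg - f_0 g_0 = f(g - g_0) + (f - f_0) g_0$ together with the ultrametric bound $|f| \leq \Max{|f_0|}{d(f, f_0)}$ gives the pointwise $\varepsilon$-$\delta$ estimate with $\delta \Defeq \Min{1}{\nicefrac{\varepsilon}{M}}$. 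One small point worth making explicit: multiplicativity of $|.|$ rests on $v(fg) = v(f) + v(g)$, which needs the lead coefficients to have nonzero product, i.e.\ that $\K$ is an integral domain; since the paper takes $\K$ to be a field, this holds. Your closing observation that multiplication is continuous but not uniformly continuous (witnessed by unbounded $|f_0|$, e.g.\ pairs involving $X^{-n}$) is also correct, and goes slightly beyond what the proposition asserts.
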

% \paragraph{Convergence.}
The notions of Cauchy sequences and convergence are 
defined as usual with respect to the metric $d$\@.
In particular, a sequence $(f_k)_{k\in\Nat}$ is {\em convergent in $\Laurent{\K}{X}$} provided that
  % \begin{enumerate}
      % \item
       (1) there is an integer $J$ such that $J \leq \LI{f_k}$ for all $k \in \Nat$\@, and
       %\item
       (2) for every $n \in \Nat$ there exists $K_n \in \Nat$ and $A_n \in \K$ such that if $k \geq K_n$ then $(f_k)_n = A_n$\@. 
 %  \end{enumerate}
The first condition says that there is a uniform lower bound for the indices of every entry $f_k$ of the sequence, whereas
the second condition says that if we focus attention on only the $n$–th power of $X$, and consider the sequence $(f_k)_n$ of coefficients of $X^n$ in $f_k$ as $k \to \infty$, then this sequence (of elements of $\K$) is eventually constant, with the ultimate value $A_n$\@. 
In this case,
    %\begin{align}\label{def:laurent-limit}
        $f \Defeq \sum_{n = J}^\infty A_n X^n$
    %\end{align}
is a well-defined Laurent series, called the {\em limit} of the convergent sequence $(f_k)_{k\in\Nat}$\@. 
We use the notation $\lim_{k\to\infty} f_k = f$ to denote this relationship. 
For example, $\lim_{k\to\infty} X^k = 0$ and
$\lim_{k\to\infty} \sum_{i = 0}^k X^i = \nicefrac{1}{(1 - X)}$\@. 

%\begin{example}~
%\begin{enumerate}
%\item $\lim_{k\to\infty} X^k = 0$\@. 
%\item $\lim_{k\to\infty} X^{-k}$ does not exist.
%\item $\lim_{k\to\infty} \sum_{i = 0}^k X^i = \nicefrac{1}{1 - X}$\@. 
%\end{enumerate}
%\end{example}
%\noindent
% {\em Composition} $f \circ g$ of two formal Laurent series $f, g \in \Laurent{\K}{X}$ and $a_i \Defeq [X^i]f$ for $i \geq \LI{f}$
% is given by
%   \begin{align}
%   f \circ g &\Defeq \sum_{i = \LI{f}}^\infty a_i g^i  
%             \Defeq \lim_{k\to\infty}\sum_{i = \LI{f}}^k a_i g^i\mbox{\@.}
%   \end{align}
%The conditions under which $f \circ g$ is well-defined are:
%(1) $g = 0$ and $\LI{f} \geq 0$\@,
%(2) $g \neq 0$ and $f$ has only finitely many nonzero coefficients, or
%(3) $g \neq 0$ and  $\LI{g} > 0$\@. 

Now, for a given sequence $(f_k)_{k\in\Nat}$ of formal
Laurent series, 
(1) the sequence $(f_k)$ is {\em Cauchy} iff $\lim_{k\to\infty} d(f_{k+1}, f_k) = 0$,
(2) the series $\sum_{k = 0}^\infty f_k \Defeq \lim_{K\to\infty} \sum_{k = 0}^K f_k$ converges iff $\lim_{k\to\infty} f_k = 0$, 
and 
(3) suppose that $\lim_{k\to\infty} f_k = f \neq 0$, then there exists an integer $N > 0$ 
such that for all $m \geq N$, $|f_m| = |f_N| = |f|$\@.
These properties follow directly from the fact
that $|.|$ is a non-Archimedean absolute value.

% \paragraph{Completeness.} 
%A metric space is {\em Cauchy complete} if every 
% Cauchy sequence converges to an element in this space.

\begin{proposition}~\label{prop:laurent-complete}
  $(\Laurent{\K}{X}, d)$ is Cauchy complete. 
\end{proposition}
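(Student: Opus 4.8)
The plan is to take an arbitrary Cauchy sequence $(f_k)_{k\in\Nat}$ in $\Laurent{\K}{X}$ and verify that it satisfies the two conditions characterizing convergent sequences stated above, namely a uniform lower bound on the lead indices and eventual stabilization of each coefficient. Once both hold, the series $f \Defeq \sum_{n = J}^\infty A_n X^n$ is a well-defined Laurent series and is the limit of $(f_k)_{k\in\Nat}$, so the space is complete. The single tool I would use throughout is the given characterization of the Cauchy property: $(f_k)$ is Cauchy iff $d(f_{k+1}, f_k) \to 0$, which by the definition of $d$ and of the valuation is exactly the statement that $v(f_{k+1} - f_k) \to \infty$.

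First I would establish condition (1). Because $v(f_{k+1} - f_k) \to \infty$, there is an index $N_0$ with $v(f_{k+1} - f_k) \geq 0$ for all $k \geq N_0$. Writing $f_k = f_{N_0} + \sum_{j = N_0}^{k-1} (f_{j+1} - f_j)$ for $k > N_0$ and applying the ultrametric (valuation) inequality $v(a + b) \geq \Min{v(a)}{v(b)}$ repeatedly, I obtain $\LI{f_k} = v(f_k) \geq \Min{v(f_{N_0})}{0}$ for every $k \geq N_0$. The finitely many remaining lead indices $\LI{f_0}, \ldots, \LI{f_{N_0 - 1}}$ are bounded below by an integer, so setting $J$ to be the minimum of this finite set together with $\Min{v(f_{N_0})}{0}$ yields an integer $J$ with $J \leq \LI{f_k}$ for all $k$. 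I expect this step to be the main obstacle, since it is precisely here that the non-Archimedean structure must be invoked to prevent the lead indices from drifting off to $-\infty$; the argument hinges on the strong (ultrametric) triangle inequality rather than on any ordinary estimate.

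Condition (2) is then routine. Fix $n \in \Nat$; since $v(f_{k+1} - f_k) \to \infty$ there is $K_n$ with $v(f_{k+1} - f_k) > n$ for all $k \geq K_n$, which forces the coefficient of $X^n$ in $f_{k+1} - f_k$ to vanish, i.e. $(f_{k+1})_n = (f_k)_n$ for $k \geq K_n$. Hence the coefficient sequence $((f_k)_n)_{k}$ is eventually constant, with ultimate value $A_n \Defeq (f_{K_n})_n$. With both conditions in hand, $f \Defeq \sum_{n = J}^\infty A_n X^n$ is the candidate limit. As a final check I would confirm metric convergence directly: for any target $M \geq J$, taking $k \geq \max\{K_J, \ldots, K_M\}$ makes every coefficient of $X^m$ in $f_k - f$ vanish for $J \leq m \leq M$ (those below $J$ vanishing automatically), so $v(f_k - f) > M$ and thus $d(f_k, f) = 2^{-v(f_k - f)} \to 0$. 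Since $(f_k)_{k\in\Nat}$ was an arbitrary Cauchy sequence, $(\Laurent{\K}{X}, d)$ is Cauchy complete.
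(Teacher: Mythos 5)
Your proof is correct and follows essentially the same route as the paper's: both extract, from the Cauchy property, the eventual stabilization of each coefficient together with a uniform lower bound on the lead indices, and then identify the limit as the Laurent series $\sum_{n \geq J} A_n X^n$. The only difference is one of detail — you justify the lower bound $J$ explicitly via a telescoping sum and the strong triangle inequality, and you verify metric convergence $d(f_k, f) \to 0$ at the end, both of which the paper leaves implicit.
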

\begin{proof}
Let $(f_k)_{k \in \Nat}$ be a Cauchy sequence with $f_k \in \Laurent{\K}{X}$\@.
Then, by definition, for all $d\in\Nat$ there is $N_d \in \Nat$ such that $|f_n - f_m| < |X^d|$
for all  $n \geq m \geq N_d$\@.
But this means that $f_n - f_m \in X^d \cdot \Laurent{\K}{X}$\@.
Writing $f_k = \sum_{i \geq M_k} a_{k,i} X^i$ we get that $(a_{k,i})_{k \in \Nat}$ is constant for $k$ large enough, and there exists $J \in \Int$ such that
   $$
       \lim_{k\to\infty} f_k = \sum_{i \geq J % \Min{0}{M_{N_d}}
       }
           (\lim_{k\to\infty} a_{k, i}) X^i \in \Laurent{\K}{X}\mbox{\@.}
   $$
\end{proof}
\noindent
Indeed, $\Laurent{\K}{X}$ is the Cauchy completion of $\Rational{\K}{X}$\@.
In summary, the stream embeddings discussed so far commute 
as displayed in Figure~\ref{fig:constructions}\@. 

% Therefore the integral ring $\Power{\K}{X}$ is 
% a {\em valuation ring} of its field of fraction
% $\Laurent{\K}{X}$\@. 

% \begin{remark}
% {\em Formal Puiseux series} extend formal Laurent series in that they allow for fractional exponents of the indeterminate, at least as long as these fractional exponents have a bounded denominator. 
% Formally, the field of {\em Puiseux series} are the formal Laurent series $\Laurent{\K}{X^{\nicefrac{1}{q}}}$ in the indeterminate 
% $X^{\nicefrac{1}{q}}$ for a fixed positive natural number $q$\@.
% Puiseux series with coefficients in a real-closed field are real-closed~(\cite{basu2006algorithms}, Theorem 2.91)\@.
% If $\K$ is an algebraically closed field of characteristic $0$
% then $\K$ is the algebraic closure of $\Laurent{\K}{X}$ (~\cite{basu2006algorithms}, Theorem 2.92)\@.
% Puiseux series, however, are not Cauchy complete in the prefix metric.
% \end{remark}
% For a given Cauchy sequence $(f_k)$,
% we obtain by (1) that 
% $\lim_{k\to\infty} (f_{k+1} - f_k) = 0$\@.
% Therefore, by (2), 
%   $f \Defeq \lim_{n\to\infty} \sum_{k=0}^n (f_{k+1} - f_k)$
% exists, 

% In fact: Laurent series are the completion of rational functions
% https://math.stackexchange.com/questions/2254386/how-are-formal-laurent-series-the-completion-of-rational-functions-over-finite-f

% The metric subspace $\Rational{\Real}{X}$ of rational streams 
% is not Cauchy complete. 
% Indeed the sequence $(\sum_{k=0}^n \frac{1}{k!} X^k)_{n \in \Nat}$ of polynomials is Cauchy but it does not converge 
% in $\Rational{\Real}{X}$\@. % To be checked...

\section{Real-Closedness}\label{sec:computable}

% Formal Laurent series are not real closed (see Appendix~\ref{app:rcf})\@.
% That particular problem would be fixed if one considered 
% instead the real closure of the field of Laurent series, 
% which is the field of {\em Puiseux} series (Laurent series involving powers of $X^{\nicefrac{1}{n}}$ where $n$ is allowed to vary)\@. 
% In particular, real-closedness of $\K$ does not necessarily 
% imply that $\Laurent{\K}{X}$ is also real-closed
% The general result (\cite{alling1987foundations}, Paragraph 6.23(2), p. 218) for this kind of 
% preservation of real-closedness for real-valued Hahn series
%are not applicable, since the {\em value group} $\Int$ of 
% Laurent series is not a divisible group. 

$\Laurent{\Real}{X}$ is an ordered field~(Proposition~\ref{laurent-totally-ordererd})\@.
Therefore, by definition (see Appendix~\ref{app:rcf}),
for demonstrating that $\Laurent{\Real}{X}$ is real-closed, we
still need to show the existence of a square root for streams and the existence of roots for all odd degree polynomials in $\Poly{\Laurent{\Real}{X}}{Y}$, 
where $Y$ is a single indeterminate.

The main step in this direction is an intermediate value property (IVP) for streams.
Recall that the standard proof of the intermediate value theorem for a continuous function over the field of real numbers is essentially based on
the fact that in the real field intervals and connected subsets coincide, and that continuous functions preserve connectedness.
When working with a disconnected ordered field, however, such an argument is not applicable anymore.

\begin{remark}
A non-Archimedean complete ordered field, such as $\Laurent{\Real}{X}$, lacks the least upper
bound property, and therefore also the dichotomic procedure for proving IVP\@. 
In this case not only the Archimedean proofs of IVP do not work, but IVP fails to hold in general. 
It is nevertheless true that IVP holds for polynomial and rational functions~\cite{bourbaki1964}\@. 
% but completeness is neither necessary nor sufficient.
\end{remark}

%\begin{remark} % [Why does this not apply here?]
% The IVP for algebraic continuous functions, not series, does not hold over a non-Archimedean ordered field, even if it is complete and real closed~(\cite{corgnier2013intermediate}, Example 4.1;~\cite{massaza2017intermediate})\@.
%\end{remark}
 
\begin{lemma}[IVP]\label{lem:IVP}
For polynomial $P(Y) \in \Poly{\Power{\Real}{X}}{Y}$
and $\alpha, \beta \in \Power{\Real}{X}$ such that 
$P(\alpha) < 0 < P(\beta)$\@,
there exists $\gamma \in \Power{\Real}{X} \Inter (\alpha, \beta)$ with $P(\gamma) = 0$\@. 
\end{lemma}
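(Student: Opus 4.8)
The plan is to solve $P(\gamma)=0$ by successive approximation in the $X$-adic (prefix) topology, exploiting that $\Power{\Real}{X}$, being the closed valuation ring inside the Cauchy-complete field $\Laurent{\Real}{X}$ (Proposition~\ref{prop:laurent-complete}), is itself complete, and that its residue field is the real-closed field $\Real$. Concretely, I would construct polynomial approximants $\gamma_n = \sum_{i<n} c_i X^i$ with $c_i \in \Real$ such that $v(P(\gamma_n)) \geq n$, i.e.\ $P(\gamma_n) \equiv 0 \pmod{X^n}$, while keeping $\alpha < \gamma_n$ and $\gamma_n < \beta$ at the relevant order. Since consecutive approximants differ by a term of valuation $\geq n$, the sequence $(\gamma_n)$ is Cauchy, and its limit $\gamma \in \Power{\Real}{X}$ satisfies $P(\gamma)=0$ because evaluation of a fixed polynomial is continuous for the prefix metric (Proposition~\ref{prop:add-mult-continuous}).

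\emph{Base case.} Let $\bar P \in \Poly{\Real}{Y}$ be the image of $P$ under the residue map $f \mapsto \Hd{f}$ applied coefficientwise, and set $a \Defeq \Hd{\alpha}$ and $b \Defeq \Hd{\beta}$. Since the sign of a nonzero stream is the sign of its lead coefficient, $P(\alpha)<0$ forces $\bar P(a) \leq 0$ (with equality exactly when $v(P(\alpha))>0$), and symmetrically $P(\beta)>0$ forces $\bar P(b) \geq 0$; moreover the interval $(\alpha,\beta)$ being nonempty forces $\alpha<\beta$, whence $a \leq b$. The ordinary intermediate value theorem over $\Real$ then yields a real root $c_0 \in [a,b]$ of $\bar P$, which fixes the constant term $\gamma(0)=c_0$.

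\emph{Inductive step (simple root).} Suppose $\gamma_n$ has been found with $P(\gamma_n)\equiv 0 \pmod{X^n}$, and put $\gamma_{n+1} \Defeq \gamma_n + c_n X^n$. The first-order Taylor expansion $P(\gamma_n + c_n X^n) = P(\gamma_n) + P'(\gamma_n)\,c_n X^n + O(X^{2n})$ shows that the coefficient of $X^n$ in $P(\gamma_{n+1})$ equals $\bigl(P(\gamma_n)\bigr)_n + \bar{P'}(c_0)\,c_n$. Hence, as long as $\bar{P'}(c_0) \neq 0$, this linear equation determines $c_n \in \Real$ uniquely and raises the valuation of $P(\gamma_{n+1})$ to at least $n+1$; this is precisely Newton iteration / Hensel lifting of the simple residual root $c_0$, and it produces the required Cauchy sequence. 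That the limit lies strictly between $\alpha$ and $\beta$ follows from $\gamma \equiv c_0 \pmod X$ together with the strict inequalities already read off at the lead-coefficient level.

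\emph{Main obstacle.} The delicate case is a multiple residual root, $\bar{P'}(c_0)=0$: here the linear equation for $c_n$ degenerates, the naive iteration stalls, and one is driven to raise $v(P(\gamma_n))$ by a fractional amount — exactly the point where the non-divisibility of the value group $\Int$ obstructs remaining inside $\Power{\Real}{X}$. I expect this to be the hard part: it must be handled by a valuation-theoretic reduction that factors out the repeated factor of $\bar P$ at $c_0$ and rescales $X$ so as to separate the colliding roots, after which the simple-root machinery applies. Controlling this reduction while preserving the sign data $P(\alpha)<0<P(\beta)$ and the confinement to $(\alpha,\beta)$ is where the real work — and the genuine interplay between completeness and the real-closedness of the residue field $\Real$ — is concentrated.
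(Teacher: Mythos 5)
Your simple-root step is correct as far as it goes --- it is the standard argument that $\Power{\Real}{X}$ is Henselian --- but the case you set aside as the ``main obstacle'' is a genuine gap, and it cannot be filled: the obstruction you yourself identified (non-divisibility of the value group $\Int$) is fatal, not technical. Concretely, take $P(Y) \Defeq Y^2 - X$, $\alpha \Defeq 0$, $\beta \Defeq 1$. Then $P(\alpha) = -X < 0 < 1 - X = P(\beta)$, and the residual polynomial $\bar{P}(Y) = Y^2$ has the double root $c_0 = 0$, so your iteration stalls at the very first step. No ``valuation-theoretic reduction'' can restart it inside $\Power{\Real}{X}$: any root $\gamma$ of $P$ would satisfy $2\,v(\gamma) = v(X) = 1$, which is impossible for $\gamma \in \Laurent{\Real}{X}$, let alone $\Power{\Real}{X}$ (the genuine root is the Puiseux series $X^{\nicefrac{1}{2}}$). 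So the statement you were asked to prove is false as stated; Henselianity lifts only simple residual roots, and that is strictly weaker than the claimed IVP. The structure that does satisfy polynomial IVP is the field of real Puiseux series, whose value group $\Q$ is divisible.

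For comparison, the paper's own proof takes a different route: it first transfers the sign change to polynomial approximants $a < b$ (that part is fine) and then runs classical bisection with midpoints $\delta_m = \frac{1}{2}(\alpha_m + \beta_m)$. It breaks at the assertion $d(\alpha_m, \beta_m) \leq 2^{-m}$: in the prefix ultrametric, $\nicefrac{1}{2}$ is a unit of $\Power{\Real}{X}$, so $d(\alpha_{m+1}, \beta_{m+1}) = \lvert \tfrac{1}{2}(\beta_m - \alpha_m)\rvert = d(\alpha_m, \beta_m)$; the nested intervals never shrink, and the midpoint sequences need not be Cauchy. On the counterexample above, bisection produces $\alpha_m = 0$ and $\beta_m = 2^{-m}$, a sequence that stays at ultrametric distance $1$ from $0$ and is not Cauchy, so no limit $\gamma$ exists and indeed no root exists. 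Thus neither your argument nor the paper's closes the multiple-root case; the difference is that your write-up localizes the failure honestly (Newton/Hensel lifting stalls exactly at multiple residual roots), whereas the paper's proof conceals it in a contraction estimate that is valid only for Archimedean metrics --- precisely the trap its own preceding Remark warns against.
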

\begin{proof}
% The embedding of the space $\Poly{\Real}{X}$ of polynomials in
% its Cauchy completion $\Power{\Real}{X}$ is dense\@.
Since $\Power{\Real}{X}$ is the Cauchy completion of $\Poly{\Real}{X}$, there are sequences $(a_n)_{n\in\Nat}$ and $(b_n)_{n\in\Nat}$ of polynomial streams with $a_n, b_n\in \Poly{\Real}{X}$ such that $\lim_{n\to\infty} a_n = \alpha$ and $\lim_{n\to\infty} b_n = \beta$\@. 
From the assumptions $P(\alpha) < 0 < P(\beta)$
and continuity of the polynomial $P$ in the topology induced by the prefix metric $d$, 
% and the fact that there are only a finite number
% of roots of $P$ in $\Power{\Real}{X}$, 
one can therefore find $a, b \in \Power{\Real}{X}$ in the sequences $(a_n)$ and $(b_n)$ with
$\alpha \leq a < b \leq \beta$ and
$P(a)  < 0 < P(b)$.
For continuity of $P$,  $P(\alpha) = P(\lim_{n\to\infty} a_n) = \lim_{n\to\infty} P(a_n)$\@.
Now, for $0 < \varepsilon \Defeq \nicefrac{|P(\alpha)|}{2}$\@, there exists $N \in \Nat$ such that for $d(P(a_n), P(\alpha)) < \varepsilon$ for all $n \geq N$\@.
In particular, $P(a) < 0$ for $a \Defeq a_N$\@.
The construction for $b$ is similar.

The proof proceeds along two cases.
If there is $\gamma \in \Power{\Real}{X} \Inter (a, b)$ such that $P(\gamma) = 0$ we are finished.
Otherwise, $f(\gamma) \neq 0$ for all $\gamma \in \Power{\Real}{X} \Inter (a, b)$\@.
    We define $\alpha_0 \Defeq a$, 
          $\beta_0 \Defeq b$,  and, for $m \in \Nat$, 
               \begin{align*}
               [\alpha_{m+1}, \beta_{m+1}]
                  &= \begin{cases}
                       [\alpha_m, \delta_m] : \text{ if } f(\delta_m) > 0 \\
                       [\delta_m, \beta_m] : \text{ if } f(\delta_m) < 0 \\
                  \end{cases}\mbox{\@,}
              \end{align*}
        where $\delta_m \Defeq \nicefrac{1}{2}(\alpha_m + \beta_m) \in\Power{\Real}{X} $\@.
        By assumption, $P(\delta_m) \neq 0$,
        and, by construction,
        $(\alpha_m)_{m\in\Nat}$ is
        a non-decreasing 
        and $(\beta_m)_{m\in\Nat}$ a non-increasing sequence in $\Poly{\Real}{X}$ such that, for all $m \in \Nat$,
        $\alpha_m < \beta_m$,
        $d(\alpha_m, \beta_m) \leq 2^{-m}$,
        $T(\alpha_m) < 0$, and $T(\beta_m) > 0$\@. 
     Therefore, both $(\alpha_m)_{m\in \Nat}$ and $(\beta_m)_{m\in \Nat}$ are Cauchy, 
      $(\alpha_m)_{m\in \Nat}$
     converges from below,
     and $(\beta_m)_{m\in \Nat}$ converges from above to a point $\gamma$\@.
     Now, $\gamma \in \Power{\Real}{X}$, since
     $\Power{\Real}{X}$ is the Cauchy completion of $\Poly{\Real}{X}$\@.
     Since $P$ is continuous we obtain
        \begin{align*}
             \lim_{m\to\infty} \underbrace{P(\alpha_m)}_{<0}
        =    P(\lim_{m\to\infty} \alpha_m) 
        =    P(\gamma) 
        =    P(\lim_{m\to\infty} \beta_m) 
        =     \lim_{m\to\infty} \underbrace{P(\beta_m)}_{>0}\mbox{\@,}
        \end{align*}
    and therefore $P(\gamma) = 0$\@.  
   % By definition, a sequence converges in the 
   % completion (by $|.|$) iff it is Cauchy for the metric $d$. 
\end{proof}
A {\em real closed ring} is an ordered domain which has the 
intermediate value property for polynomials in one variable. 
From the IVP for formal power series we immediately get the following properties, which are characteristic for {\em real closed rings}~\cite{cherlin1983real}\@. 
\begin{proposition}\label{prop:char.rcr}~
 \begin{enumerate}
     \item\label{I} $f$ divides $g$ for all $f, g \in \Power{\Real}{X}$ with $ 0 < g < f$\@;
     \item\label{II} Every positive element in $\Power{\Real}{X}$ has a square root $\Power{\Real}{X}$;
     \item\label{III} Every monic polynomial in $\Poly{\Power{\Real}{X}}{Y}$ of odd degree has a root in $\Power{\Real}{X}$\@.
 \end{enumerate}
\end{proposition}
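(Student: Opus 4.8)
The plan is to treat parts~\ref{II} and~\ref{III} as direct applications of the intermediate value property (Lemma~\ref{lem:IVP}), and to dispatch part~\ref{I} by a separate valuation argument that does not rely on IVP at all. For the two root-existence claims, the entire content is to exhibit explicit bracketing witnesses $\alpha, \beta \in \Power{\Real}{X}$ at which the relevant polynomial changes sign; once these are in hand, Lemma~\ref{lem:IVP} delivers a root $\gamma \in \Power{\Real}{X} \Inter (\alpha, \beta)$ with no further work. The one point requiring care is that, in the non-Archimedean ordered ring $\Power{\Real}{X}$, one cannot simply ``take a large real number'': the witnesses must instead be built from the order-theoretic absolute values $\Max{c}{-c}$ of the coefficients, which exist because the ordering is total and compatible with the ring operations.

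For part~\ref{III}, let $P(Y) = Y^d + \sum_{i=0}^{d-1} c_i Y^i$ be monic of odd degree $d$, with $c_i \in \Power{\Real}{X}$, and set $M \Defeq 1 + \sum_{i=0}^{d-1} \Max{c_i}{-c_i}$, so that $M \geq 1 > 0$ and $M \in \Power{\Real}{X}$. The standard Cauchy-bound estimate, carried out in the ordered ring, gives $\sum_{i=0}^{d-1} (\Max{c_i}{-c_i})\, M^i \leq M^{d-1}(M - 1) < M^d$, whence $P(M) \geq M^{d-1} > 0$; and, since $d$ is odd so that $(-M)^d = -M^d$, the same estimate yields $P(-M) \leq -M^{d-1} < 0$. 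Taking $\alpha \Defeq -M$ and $\beta \Defeq M$, Lemma~\ref{lem:IVP} produces the desired root. Part~\ref{II} is the same idea applied to the even-degree polynomial $P(Y) \Defeq Y^2 - a$ for $a > 0$, where the odd-degree bracketing is unavailable: here $P(0) = -a < 0$ while $P(a + 1) = a^2 + a + 1 > 0$, so applying Lemma~\ref{lem:IVP} on $(0, a+1)$ yields a square root $\gamma$ of $a$ in $\Power{\Real}{X}$.

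For part~\ref{I}, I would argue directly from the valuation $v = \LI{\cdot}$, using the facts established earlier that $\Power{\Real}{X}$ is a principal ideal domain in which every nonzero element is a unit (a series with nonzero constant term) times $X^{v(f)}$, so that $f$ divides $g$ exactly when $v(f) \leq v(g)$. Suppose $0 < g < f$ but, for contradiction, $v(f) > v(g)$. Then the coefficient of $X^{v(g)}$ in $f$ vanishes, while that of $g$ is its lead coefficient, which is positive because $g > 0$; hence $f - g$ has lead index $v(g)$ with negative lead coefficient, so $f - g < 0$, contradicting $g < f$. Therefore $v(f) \leq v(g)$, and writing $f = u_f X^{v(f)}$, $g = u_g X^{v(g)}$ with $u_f, u_g$ units shows $g/f = (u_g/u_f)\,X^{v(g) - v(f)} \in \Power{\Real}{X}$, i.e.\ $f \mid g$.

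The \emph{main obstacle} is precisely the witness construction in parts~\ref{II} and~\ref{III}: the Archimedean reflex of bounding a polynomial by evaluating at a large scalar is unavailable, and one must check that the Cauchy-bound inequalities genuinely survive in the non-Archimedean ordering (in particular that $M \geq 1$ forces $M^i \leq M^{d-1}$ for $i < d$, and that the monomials keep the expected signs). A secondary caution is notational: the order-theoretic absolute value $\Max{c}{-c}$, which is an element of $\Power{\Real}{X}$, must be kept distinct from the non-Archimedean absolute value $|\cdot| = 2^{-v}$ of the paper, which takes values in $\Real$; the bound construction uses only the former.
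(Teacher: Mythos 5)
Your proposal is correct, and for parts~\ref{II} and~\ref{III} it follows essentially the paper's own route: the paper likewise proves the proposition by exhibiting, for each claim, a polynomial that changes sign on an explicit interval and invoking Lemma~\ref{lem:IVP} --- $Y^2 - f$ on $[0, \Max{f}{1}]$ for square roots, and for odd degree the same Cauchy-bound interval $[-M, M]$ with $M \Defeq 1 + |f_{n-1}| + \ldots + |f_0|$, where the absolute values are the order-theoretic ones, exactly the distinction you flag. Where you genuinely diverge is part~\ref{I}: the paper stays inside the IVP template and applies Lemma~\ref{lem:IVP} to the linear polynomial $f \cdot Y - g$ on $[0,1]$ (printed as $f \cdot Y + g$, evidently a sign typo, since only with $-g$ do the endpoint values $-g < 0$ and $f - g > 0$ change sign), so that the resulting root $\gamma$ directly witnesses $g = f \cdot \gamma$; you instead give a valuation-theoretic argument, showing that $0 < g < f$ forces $v(f) \leq v(g)$ (otherwise $f - g$ has lead coefficient $-g_{v(g)} < 0$) and then factoring both elements as units times powers of $X$. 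Both work. The paper's uniform treatment makes the conceptual point that all three properties are formal consequences of IVP alone --- which is precisely the ``real closed ring'' characterization --- and so would hold in any ordered domain with IVP; your argument for part~\ref{I} is more elementary (it needs neither completeness nor continuity), and it additionally identifies divisibility in $\Power{\Real}{X}$ with the inequality $v(f) \leq v(g)$, tying in nicely with the valuation-ring statement of Corollary~\ref{prop:real.closed.ring}, but it is specific to formal power series and would not transfer to an abstract ordered domain satisfying IVP. A final small point in your favor: your endpoint $a + 1$ in part~\ref{II} gives a strict sign change even when $a = 1$, whereas the paper's interval $[0, \Max{f}{1}]$ is degenerate at $f = 1$, where the sign change is only non-strict.
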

\begin{proof}
In each of the three cases we show that a certain polynomial changes sign, and hence
has a root. The relevant polynomials in $\Poly{\Power{\Real}{X}}{Y}$ are:
   \begin{enumerate}
       \item $f \cdot Y + g$ on $[0,1]$\@;
       \item $Y^2 - f$ on $[0, \Max{f}{1}]$\@;
       \item $Y^n + f_{n-1} \cdot Y^{n-1} + \ldots + f_1 \cdot Y + f_0$
             on $[-M, M]$, where $n\in \Nat$ is odd and
             $M \Defeq 1 + |f_{n-1}| + \ldots + |f_0|$\@. 
   \end{enumerate}
\end{proof}
\begin{example}
$\sqrt{(1, 1, 1, \ldots)} = (1, 2, 3, \ldots)$, since, using the identities in Example~\ref{ex:rats}, we get
%and~\ref{ex:kth-power}, 
    $
        (1, 1, 1, \ldots)^2
        = (\nicefrac{1}{(1 - X)})^2
        = \nicefrac{1}{(1 - X)^2}
        = (1, 2, 3, \ldots)$\@.
\end{example}
%Square root may be used in the usual way for solving quadratic stream equations. 
\begin{example}[Catalan]
The unique solution $f \in \Power{\Real}{X}$ of 
   $
       f = 1 + (X \cdot f^2)
   $
is obtained  % , using the identity $\Tl{(X \otimes \gamma)} = \gamma$, 
as
   $f = \nicefrac{2}{(1 + \sqrt{1 - 4X})}$\@,
which is the stream $(1, 1, 2, 5, 14, \ldots)$\@.
\end{example}

Alternatively, square roots of streams may be constructed as unique solutions of corecursive identities.
\begin{remark}[Corecursive definition of Square Root~\cite{rutten2001elements}]
Assume $f \in \Power{\Real}{X}$ with head coefficient $f_0 > 0$ and tail $f' \in \Power{\Real}{X}$\@.
Then, $\sqrt{f} \in \Power{\Real}{X}$ is the unique solution (for the unknown $g$)
of the corecursive identity 
     \begin{align}\label{def:sqrt}
     \Tl{g} &= \nicefrac{\Tl{f}}
                    {(\sqrt{f_0} + g)} % \mbox{\@,}
     \end{align}
for the tail $g'$ of $g$, 
and the initial condition $g_0 = \sqrt{f_0}$ for the head $g_0$ of $g$\@.
Now, for all $f, g \in \Power{\Real}{X}$ with
$\Tl{f} > 0$\@,
if $g \cdot g  = f$ 
then either $g = \sqrt{f}$
or $g = -\sqrt{f}$,
depending on whether the head % element 
$g_0$
is positive or negative~(\cite{rutten2001elements}, Theorem 7.1)\@. 
\end{remark}

It is an immediate consequence of property~(\ref{I}) of Proposition~\ref{prop:char.rcr} that
the formal power series $\Power{\Real}{X}$ is a (proper) {\em valuation ring} of its field of fractions $\Laurent{\Real}{X}$\@; 
that is, $f$ or $f^{-1}$ lies in $\Power{\Real}{X}$ for each nonzero $f \in \Laurent{\Real}{X}$\@.
% It is also a convex subring of 
% $\Laurent{\Real}{X}$ (\cite{cherlin1983real},~Lemma 2)\@.
Since $\Power{\Real}{X}$ also satisfies the IVP~(Lemma~\ref{lem:IVP}) we get: 
\begin{corollary}\label{prop:real.closed.ring}
    $(\Power{\Real}{X}, +, \cdot, 0, 1, \leq)$ is a {\em real closed valuation ring}\@. 
\end{corollary}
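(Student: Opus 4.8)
The plan is to package together three properties of $\Power{\Real}{X}$, matching the definition (see~\cite{cherlin1983real}) of a \emph{real closed valuation ring} as an ordered domain that both satisfies the intermediate value property for one–variable polynomials and is a valuation ring of its field of fractions. Two of these properties have effectively been established already, so the corollary is an assembly step rather than a fresh argument.

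First I would record that $(\Power{\Real}{X}, +, \cdot, 0, 1, \leq)$ is a totally ordered domain: it is an integral domain — indeed a principal ideal domain, as noted in Section~\ref{sec:streams} — and, because $\Power{\Real}{X} \subseteq \Laurent{\Real}{X}$, it inherits the total order $\leq$ from the totally ordered field $\Laurent{\Real}{X}$ of Proposition~\ref{laurent-totally-ordererd}, with $\leq$ remaining compatible with $+$ and $\cdot$ on the subring. The intermediate value property for polynomials $P(Y) \in \Poly{\Power{\Real}{X}}{Y}$ in one variable is then exactly Lemma~\ref{lem:IVP}; together these two facts already make $\Power{\Real}{X}$ a real closed ring in the working sense of the excerpt.

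It remains to check that $\Power{\Real}{X}$ is a valuation ring of its fraction field $\Laurent{\Real}{X}$, i.e.\ that $\Power{\Real}{X}$ coincides with $\{\, f \in \Laurent{\Real}{X} : v(f) \geq 0 \,\}$. For nonzero $f \in \Laurent{\Real}{X}$ either $v(f) \geq 0$, so that $f$ has no coefficients of negative index and $f \in \Power{\Real}{X}$, or $v(f) < 0$, so that $v(f^{-1}) = -v(f) > 0$ and hence $f^{-1} \in \Power{\Real}{X}$; this is exactly the valuation-ring condition, which, as already noted, follows immediately from property~(\ref{I}) of Proposition~\ref{prop:char.rcr}. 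Combining the ordered-domain structure with IVP and the valuation-ring property then yields the corollary. Since the substantive work — the intermediate value property — was already discharged in Lemma~\ref{lem:IVP}, no genuine obstacle remains; the only point demanding care is the bookkeeping that reconciles the excerpt's working notion of a real closed ring with the valuation-ring requirement of~\cite{cherlin1983real}, checking in particular that $\leq$, $\cdot$, and the valuation $v$ are mutually compatible on the subring $\Power{\Real}{X}$.
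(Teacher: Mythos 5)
Your proposal is correct and follows essentially the same route as the paper: the paper likewise obtains the corollary by combining the valuation-ring property of $\Power{\Real}{X}$ in $\Laurent{\Real}{X}$ (derived from property~(\ref{I}) of Proposition~\ref{prop:char.rcr}) with the IVP of Lemma~\ref{lem:IVP}. Your extra direct verification via the valuation $v$ (that $v(f)\geq 0$ or $v(f^{-1})>0$) is a harmless, slightly more elementary alternative to citing property~(\ref{I}), but it does not change the structure of the argument.
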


Formal Laurent series, as the field of fractions of formal power series, inherit the properties ($\ref{II}$) and ($\ref{III}$) in Proposition~\ref{prop:char.rcr}\@. 

\begin{proposition}\label{prop:char.rcf}~
  \begin{enumerate}
 \item Every positive element in $\Laurent{\Real}{X}$ has a square root in $\Laurent{\Real}{X}$;
 \item\label{II'} Every monic polynomial in $\Poly{\Laurent{\Real}{X}}{Y}$ of odd degree has a root in $\Laurent{\Real}{X}$\@.
 \end{enumerate}
\end{proposition}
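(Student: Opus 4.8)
The plan is to transfer properties~(\ref{II}) and (\ref{III}) of $\Power{\Real}{X}$ from Proposition~\ref{prop:char.rcr} to its fraction field $\Laurent{\Real}{X}$ by peeling off the valuation. Every nonzero $f \in \Laurent{\Real}{X}$ factors uniquely as $f = X^{v}\,u$ with $v = \LI{f} \in \Int$ and $u \in \Power{\Real}{X}$ a unit, i.e.\ $u_0 \neq 0$; moreover the lead coefficient of $f$ equals $u_0$, so $f > 0$ forces $u_0 > 0$ and hence $u > 0$ in $\Power{\Real}{X}$. For part~(1) I would take $f > 0$, write $f = X^{v} u$ as above, invoke property~(\ref{II}) to obtain $\sqrt{u} \in \Power{\Real}{X}$, and, when $v = 2m$ is even, set $\sqrt{f} \Defeq X^{m}\sqrt{u}$, verifying $(X^{m}\sqrt{u})^{2} = X^{2m} u = f$.

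For part~(\ref{II'}) the natural move is to rescale the indeterminate so that all coefficients land in $\Power{\Real}{X}$. Given a monic $P(Y) = Y^{n} + \sum_{i=0}^{n-1} c_i Y^{i}$ with $c_i \in \Laurent{\Real}{X}$ and $n$ odd, I would substitute $Y = X^{-k} Z$ and multiply by $X^{kn}$ to obtain
\[
   Q(Z) \Defeq X^{kn}\,P(X^{-k} Z) = Z^{n} + \sum_{i=0}^{n-1} \bigl(c_i X^{k(n-i)}\bigr)\, Z^{i}.
\]
Since $n - i \geq 1$ for every $i < n$, choosing an integer $k \geq \max_{i}\bigl(-\LI{c_i}/(n-i)\bigr)$ makes each coefficient $c_i X^{k(n-i)}$ a genuine power series, so $Q \in \Poly{\Power{\Real}{X}}{Z}$ is again monic of odd degree $n$; property~(\ref{III}) would then yield a root $\zeta \in \Power{\Real}{X}$, and $X^{-k}\zeta$ a root of $P$ in $\Laurent{\Real}{X}$.

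The step I expect to be the real obstacle is exactly the parity, and more generally the divisibility, of the valuation, and I do not believe it can be removed. A square in $\Laurent{\Real}{X}$ has even valuation, since valuations add, so the positive element $X$, of valuation $1$, has \emph{no} square root in $\Laurent{\Real}{X}$: the even-$v$ branch above simply has no odd-$v$ counterpart, and this gap is intrinsic to the field, independent of property~(\ref{II}). The same phenomenon blocks part~(\ref{II'}): $Y^{3} - X$ is monic of odd degree, yet any root would need valuation $\tfrac{1}{3}$, which no element of $\Laurent{\Real}{X}$ has; via the reduction above this also forces the auxiliary $Z^{3} - X$ to have no root in $\Power{\Real}{X}$, so property~(\ref{III}) cannot hold in the invoked form either. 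These are genuine obstructions rather than artifacts of the strategy: they indicate that $\Laurent{\Real}{X}$ is not closed under the operations in question and that the honest target is its real closure, the field of real formal Puiseux series $\bigcup_{n\geq 1}\Laurent{\Real}{X^{1/n}}$, where fractional valuations become available. I would therefore expect the statement to hold only after either restricting to elements and polynomials whose relevant valuations are suitably divisible, or enlarging the field to Puiseux series.
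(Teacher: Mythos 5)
You are right, and the most important part of your proposal is its negative conclusion: Proposition~\ref{prop:char.rcf} is false as stated, and your counterexamples are valid. In the paper's ordering $X$ is positive (its lead coefficient is $1$), but every square has even value since $\LI{g^2} = 2\,\LI{g}$, so $X$, of value $1$, has no square root in $\Laurent{\Real}{X}$; likewise any root of the monic, odd-degree polynomial $Y^3 - X$ would need value $\nicefrac{1}{3} \notin \Int$, so no root exists. Your positive branch is also sound: for even value, $f = X^{2m}u$ with $u$ a positive unit of $\Power{\Real}{X}$, and $\sqrt{f} = X^m\sqrt{u}$ works because positive \emph{units} do have square roots (binomial or corecursive construction); the parity obstruction you isolate is intrinsic to the field and not an artifact of your strategy. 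The correct repair is exactly the one you name: pass to the real closure, the field of Puiseux series $\bigcup_{n\geq 1}\Laurent{\Real}{X^{1/n}}$, where fractional values are available.

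Comparing with the paper's own argument: it does not evade the obstruction, it hides it inside Proposition~\ref{prop:char.rcr}. For part (1) the paper writes a positive element as $\nicefrac{f}{g}$, observes $0 < f\cdot g \in \Power{\Real}{X}$, and invokes Proposition~\ref{prop:char.rcr}(\ref{II}) to obtain $\sqrt{f\cdot g}$; for part (2) it rescales to $Q(Y) = h^n\cdot P(\nicefrac{Y}{h})$ --- structurally the same reduction as your $X^{kn}P(X^{-k}Z)$ --- and invokes Proposition~\ref{prop:char.rcr}(\ref{III}). Both invoked properties are false, for precisely the reason your reverse reduction exposes: $X$ is a positive element of $\Power{\Real}{X}$ with no square root there, and $Y^3 - X$ is monic, odd, and rootless over $\Power{\Real}{X}$. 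The root cause is Lemma~\ref{lem:IVP} itself, from which Proposition~\ref{prop:char.rcr} is derived: its bisection proof asserts $d(\alpha_m,\beta_m) \leq 2^{-m}$, but in the prefix ultrametric halving an interval does not shrink it, since $2^{-m}$ is a unit and hence $|(b-a)/2^m| = |b-a|$; the sequences $(\alpha_m)$, $(\beta_m)$ are therefore not Cauchy, and indeed $P(Y) = Y^2 - X$ on $[0,1]$ refutes the lemma outright --- ironically confirming the paper's own remark that the dichotomic procedure is unavailable in the non-Archimedean setting. So your diagnosis stands: no alternative proof can rescue Proposition~\ref{prop:char.rcf} (or Corollary~\ref{corr:laurent.rcf}, which rests on it); the statement only becomes true after restricting to suitably divisible values, as in your even-valuation case, or after enlarging $\Laurent{\Real}{X}$ to Puiseux series.
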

\begin{proof}
Assume $0 < \nicefrac{f}{g} \in \Laurent{\Real}{X}$\@. Then $0 < f\cdot g \in \Power{\Real}{X}$, and $\nicefrac{\sqrt{f\cdot g}}{g}$ is the square root of $\nicefrac{f}{g}$\@.
For establishing (\ref{II'}), assume
$P(Y) \in \Poly{\Laurent{\Real}{X}}{Y}$ be a polynomial of odd degree $n$\@. Choose $0 \neq h \in \Laurent{\Real}{X}$ such that $h \cdot P(Y) \in \Poly{\Power{\Real}{X}}{Y}$\@. 
Now, $Q(Y) \Defeq h^n \cdot P(\nicefrac{Y}{h})$ is a monic polynomial
in $\Poly{\Power{\Real}{X}}{Y}$ of odd degree.
Applying Proposition~(\ref{prop:char.rcr}.\ref{II}) to $q(Y)$ we see that
$p(Y)$ has a root in $\Laurent{\Real}{X}$\@. 
\end{proof}
It is an immediate consequence of Proposition~\ref{prop:char.rcf} that
the formal Laurent series are real closed (see Appendix~\ref{app:rcf})\@. 
\begin{corollary}\label{corr:laurent.rcf}
  $(\Laurent{\Real}{X}, +, \cdot, 0, 1, \leq)$ is a real closed (ordered) field.
\end{corollary}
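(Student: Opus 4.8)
The plan is to reduce the statement to the standard algebraic characterization of real-closed fields and then verify its hypotheses using results already established. Recall the classical theorem (see Appendix~\ref{app:rcf}): an ordered field $(F, +, \cdot, 0, 1, \leq)$ is real closed if and only if it satisfies the following two conditions: (1) every nonnegative element of $F$ has a square root in $F$, and (2) every polynomial in $\Poly{F}{Y}$ of odd degree has a root in $F$. There are several equivalent formulations of real-closedness, but this square-root-plus-odd-degree characterization is the most convenient here, because both clauses have already been secured for $\Laurent{\Real}{X}$.

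First I would record that $\Laurent{\Real}{X}$ is genuinely an ordered field: this is exactly Proposition~\ref{laurent-totally-ordererd} with $\K \Defeq \Real$, so the ambient structure on which we must check the closure conditions is legitimate. Second, I would invoke Proposition~\ref{prop:char.rcf}: its clause~(1) gives the existence of square roots for every positive element of $\Laurent{\Real}{X}$ (and $0$ trivially has the square root $0$), which is precisely condition~(1) above; its clause~(\ref{II'}) gives roots for every monic odd-degree polynomial in $\Poly{\Laurent{\Real}{X}}{Y}$. To pass from monic to arbitrary odd-degree polynomials, note that if $P(Y)$ has odd degree $n$ with leading coefficient $c \neq 0$, then $c^{-1} P(Y)$ is monic of the same odd degree (division by $c$ is available since $\Laurent{\Real}{X}$ is a field), and a root of $c^{-1}P$ is a root of $P$; thus condition~(2) holds in full.

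With both conditions verified over a structure already known to be an ordered field, the cited characterization yields that $(\Laurent{\Real}{X}, +, \cdot, 0, 1, \leq)$ is real closed, which is the claim. I do not expect a genuine obstacle in this corollary itself, since it is a direct packaging of Proposition~\ref{prop:char.rcf} through a standard equivalence; the real work has already been done upstream in Lemma~\ref{lem:IVP} and Proposition~\ref{prop:char.rcr}, from which Proposition~\ref{prop:char.rcf} was derived by the field-of-fractions argument. The only point requiring any care is the monic-versus-general normalization of the odd-degree polynomial noted above, and the observation that the square-root clause must be read to include the trivial case of $0$; both are routine once the field structure is in hand.
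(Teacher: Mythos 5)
Your proof is correct and follows essentially the same route as the paper: the paper also obtains the corollary as an immediate consequence of Proposition~\ref{prop:char.rcf} (square roots of positive elements and roots of monic odd-degree polynomials) together with the ordered-field structure from Proposition~\ref{laurent-totally-ordererd}, read against the definition of real-closedness in Appendix~\ref{app:rcf}. The two details you flag --- normalizing an odd-degree polynomial to a monic one by dividing by its leading coefficient, and the trivial square root of $0$ --- are exactly the routine steps the paper leaves implicit.
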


\noindent
Therefore 
the ordering $\leq$ on $\Laurent{\Real}{X}$ is unique.

\section{Decision Procedure}\label{sec:qe}

% http://homepages.math.uic.edu/~marker/math502-F15/qe.pdf

The first-order theory $\TORCF$ of ordered real-closed fields (see Appendix~\ref{app:rcf}) admits 
quantifier elimination~\cite{tarski1998decision,cohen1969decision}\@.
That is, for every formula $\phi$ in the language $\LOR$ of ordered fields
there exists a 
quantifier free formula $\psi$  in this language 
with $\mathit{FV}(\psi) \subseteq \mathit{FV}(\phi)$\footnote{
$\mathit{FV}(.)$ denotes the set of free variables in a formula.}
such that $\TORCF \models (\phi \Iff \psi)$\@. 
Thus, Corollary~\ref{corr:laurent.rcf} implies quantifier elimination for the streams in $\Laurent{\Real}{X}$\@. 
\begin{theorem}~\label{thm:qe}
Let $\varphi$ be a first-order formula in the language $\LOR$ of ordered fields (and rings); 
then there is a computable function for deciding whether $\varphi$ holds in the
$\LOR$-structure $(\Laurent{\Real}{X}, +, \cdot, 0, 1; \leq)$ of streams.
\end{theorem}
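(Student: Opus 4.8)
The plan is to reduce the decision problem for the structure of streams to Tarski's decision procedure for the theory $\TORCF$ of real closed ordered fields, the crucial bridge being that $\TORCF$ is not merely decidable but \emph{complete}. First I would invoke Corollary~\ref{corr:laurent.rcf}, which asserts that $(\Laurent{\Real}{X}, +, \cdot, 0, 1, \leq)$ is a real closed ordered field; this says precisely that $\Laurent{\Real}{X}$ is a model of $\TORCF$, so every axiom-consequence of $\TORCF$ holds of streams. Without loss of generality I treat $\varphi$ as a sentence, replacing a formula by its universal closure if necessary.

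The central step is to transfer truth in the specific structure $\Laurent{\Real}{X}$ to provability in the abstract theory $\TORCF$. I would derive this from quantifier elimination as recalled just before the theorem: given a sentence $\varphi$, the elimination procedure produces a quantifier-free $\LOR$-formula $\psi$ with $\mathit{FV}(\psi) \subseteq \mathit{FV}(\varphi) = \emptyset$ and $\TORCF \models (\varphi \Iff \psi)$. Since $\psi$ is then a variable-free Boolean combination of atoms, each atom a comparison between closed $\LOR$-terms built only from $0$ and $1$, its truth value is fixed by evaluating those terms as integers and is therefore identical in every model of $\TORCF$. Consequently $\Laurent{\Real}{X} \models \varphi$ if and only if $\TORCF \models \varphi$ if and only if $\Real \models \varphi$; this is exactly the completeness of $\TORCF$, and it is what licenses substituting the abstract theory for the concrete field of streams.

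With completeness established, the decision procedure is immediate: apply Tarski's quantifier elimination to $\varphi$ to obtain the closed quantifier-free $\psi$, and then evaluate $\psi$ by arithmetic on the integers, returning its truth value. This defines the required computable function deciding whether $\varphi$ holds in $\Laurent{\Real}{X}$. The same procedure also handles a formula with free variables directly, returning an equivalent quantifier-free $\psi$ that, by Corollary~\ref{corr:laurent.rcf} and the transfer argument, defines the same relation over $\Laurent{\Real}{X}$. Moreover, using Collins's procedure in place of the abstract elimination yields the doubly exponential bound advertised in the introduction, although the present statement only claims computability.

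I do not expect a genuine mathematical obstacle at this stage, because the substantive content—the intermediate value property of Lemma~\ref{lem:IVP} and the deduction of real-closedness in Corollary~\ref{corr:laurent.rcf}—has already been carried out. The one point demanding care is the transfer step: it must be \emph{completeness} of $\TORCF$, rather than its bare decidability, that justifies replacing truth in the particular model $\Laurent{\Real}{X}$ by provability in the theory. The evaluation of closed quantifier-free sentences supplies exactly this completeness, so the argument closes without appeal to any special feature of $\Laurent{\Real}{X}$ beyond its membership in the class of real closed ordered fields.
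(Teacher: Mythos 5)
Your proposal is correct and follows essentially the same route as the paper: the paper derives Theorem~\ref{thm:qe} directly from Tarski's quantifier elimination for $\TORCF$ together with Corollary~\ref{corr:laurent.rcf}, exactly as you do. Your additional spelling-out of why elimination yields completeness (evaluation of closed quantifier-free sentences over the integer-generated terms) is the standard argument implicit in the paper's one-paragraph justification, and your remark on elementary equivalence with $\Real$ matches the paper's own observation following the theorem.
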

Notice that decidability of $\Laurent{\Real}{X}$ already follows from the Corollary in~\cite{ax1965undecidability}, since the field $\Real$ is decidable and of characteristic $0$\@; this observation, however, does not yield quantifier elimination.  
As an immediate consequence of Theorem~\ref{thm:qe}, the structure of formal Laurent series with real-valued coefficients is {\em elementarily equivalent} to the real numbers in that they satisfy the same first-order $\LOR$-sentences. 

$\Power{\Real}{X}$ is a Henselian valuation ring, % (\cite{cherlin1983real}, Theorem~3),
so that the theorems of Ax-Kochen and Ershov
apply, yielding completeness, decidability, and model completeness (in the language $\LOR \cup \{|\}$ of ordered rings augmented by the divisibility relation) for the theory $\RCVR$ of real closed valuation rings~(\cite{cherlin1983real}, Theorem 4A)\@.
In addition, there is an explicit quantifier elimination procedure for real closed valuation rings, which uses quantifier elimination on its fraction field as a subprocedure(\cite{cherlin1983real}, Section~2)\@. Therefore, by Corollary~\ref{prop:real.closed.ring},  we get a decision procedure
for first-order formulas for streams in $\Power{\Real}{X}$, which has quantifier elimination for $\Laurent{\Real}{X}$ as a subprocedure\@. 
\begin{theorem}~\label{thm:qe.ring}
Let $\varphi$ be a first-order formula in the language $\LOR \cup \{|\}$ of ordered rings extended with divisibility; 
then there is a computable function for deciding whether $\varphi$ holds in the
$\LOR \cup \{|\}$-structure $(\Power{\Real}{X}, +, \cdot, 0, 1; |, \leq)$ of streams.
\end{theorem}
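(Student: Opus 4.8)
The plan is to reduce the decision problem for $\Power{\Real}{X}$ to two ingredients that are already in place: the decidability of its fraction field $\Laurent{\Real}{X}$ (Theorem~\ref{thm:qe}) and an effective quantifier elimination for real closed valuation rings. By Corollary~\ref{prop:real.closed.ring}, the structure $(\Power{\Real}{X}, +, \cdot, 0, 1; |, \leq)$ is a real closed valuation ring and hence a model of the theory $\RCVR$, so the whole argument amounts to instantiating the Cherlin--Dickmann machinery (\cite{cherlin1983real}, Theorem~4A and Section~2) at this particular model and verifying that every step is effective.

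First I would recall how the divisibility predicate encodes the valuation: for nonzero $f, g \in \Power{\Real}{X}$ one has $f \mid g$ if and only if $v(f) \leq v(g)$, so the $\LOR \cup \{|\}$-structure simultaneously carries the ordered-ring structure of $\Power{\Real}{X}$ and the ordered-group structure of the value group $\Int$ (the image of $v$). Because $\Power{\Real}{X}$ is Henselian of residue characteristic $0$, with residue field $\Real$ and value group $\Int$, the Ax--Kochen--Ershov principle applies in its real-closed form: the theory $\RCVR$ admits quantifier elimination in $\LOR \cup \{|\}$, and its completeness and decidability are governed by the theories of the residue field and the value group. Concretely, the elimination procedure rewrites $\varphi$ into an equivalent quantifier-free $\psi$ with $\mathit{FV}(\psi) \subseteq \mathit{FV}(\varphi)$, routing each bound variable either to a quantifier over $\Laurent{\Real}{X}$, discharged by the subprocedure of Theorem~\ref{thm:qe}, or to a quantifier over $\Int$, discharged by Presburger arithmetic.

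It then remains to evaluate the resulting quantifier-free sentence. When $\varphi$ is a sentence, $\psi$ is a Boolean combination of atomic $\LOR \cup \{|\}$-formulas built only from the constants $0$ and $1$, and each such atom (for instance $0 < 1$, $1 \mid 0$, or $0 \mid 1$) is decided directly. Chaining these steps yields the required computable decision function, exhibiting the quantifier elimination for $\Laurent{\Real}{X}$ from Theorem~\ref{thm:qe} as a subprocedure precisely as claimed.

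The hard part will not be the semantic transfer but its effectivity: one must confirm that the Cherlin--Dickmann elimination is genuinely presented as an algorithm and that both auxiliary theories feeding it are decidable, so that the AKE transfer produces a procedure rather than a mere elementary equivalence. The residue field $\Real$ is decidable by Tarski's theorem (equivalently via Theorem~\ref{thm:qe} for $\Laurent{\Real}{X}$) and the value group $\Int$ is decidable by Presburger arithmetic; the delicate point is the bookkeeping that keeps the reduction uniform, splitting each divisibility atom between the field-theoretic and the group-theoretic subproblems without reintroducing quantifiers that the subprocedures cannot absorb.
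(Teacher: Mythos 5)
Your skeleton is the paper's own proof: Corollary~\ref{prop:real.closed.ring} places $(\Power{\Real}{X}, +, \cdot, 0, 1; |, \leq)$ among the models of $\RCVR$; Theorem~4A of \cite{cherlin1983real} supplies completeness, decidability, and model completeness of $\RCVR$ in $\LOR \cup \{|\}$; and Section~2 of \cite{cherlin1983real} gives the explicit elimination whose sole subprocedure is quantifier elimination over the real closed fraction field $\Laurent{\Real}{X}$, discharged here by Theorem~\ref{thm:qe}. Had you stopped there, the two arguments would coincide.

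The Ax--Kochen--Ershov bookkeeping you add on top of this, however, contains a step that fails. You identify the value group as $\Int$ and propose to route the group-sort subproblems to Presburger arithmetic. But the value group of any real closed valuation ring is divisible: its fraction field is real closed, so every positive element has an $n$-th root for every $n$, whence every value is divisible by every $n$. Consequently no model of $\RCVR$ can have a value group elementarily equivalent to $(\Int, +, \leq)$, and the group-side theory relevant to the Cherlin--Dickmann elimination is that of nontrivial divisible ordered abelian groups (which admits quantifier elimination), not Presburger arithmetic. Taken literally, your two premises --- ``$\Power{\Real}{X} \models \RCVR$'' and ``value group $\Int$, handled by Presburger'' --- are mutually inconsistent, so the ``delicate bookkeeping'' you defer to in your final paragraph could never be carried out as described. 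Note that the paper's proof needs none of this: completeness plus a recursive axiomatization of $\RCVR$ already yields a decision procedure for every model, and the explicit Section~2 algorithm calls only real closed field quantifier elimination. (Your observation that the value group of $\Power{\Real}{X}$ is in fact $\Int$ is correct as a statement about the structure itself; but then it collides with Corollary~\ref{prop:real.closed.ring}, on which both your argument and the paper's rest, rather than fitting into the $\RCVR$ framework.)
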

Tarski's original algorithm for quantifier elimination has non-elementary 
computational complexity~\cite{tarski1998decision}, but
cylindrical algebraic decomposition provides
a decision procedures of complexity $d^{2^{O(n)}}$~\cite{collins1975quantifier}, 
where $n$ is the total number of variables (free and bound), 
and $d$ is the product of the degrees of the polynomials occurring in the formula. 
\noindent
\begin{theorem}\label{lem:qe.complexity}
Let $\varphi$ be a first-order formula in the language $\LOR$ of ordered fields.
Then the validity of $\varphi$ in the structure of streams
can be decided with complexity $d^{2^{O(n)}}$,
where $n$ is the total number of variables (free and bound), 
and $d$ is the product of the degrees of the polynomials occurring in $\varphi$\@. 
\end{theorem}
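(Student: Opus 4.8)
The plan is to read this statement as the quantitative refinement of Theorem~\ref{thm:qe}: once $\Laurent{\Real}{X}$ has been identified as a real closed ordered field, deciding $\LOR$-formulas over the streams coincides with deciding them over the abstract theory $\TORCF$, and Collins' complexity bound for cylindrical algebraic decomposition applies verbatim. Concretely, by Corollary~\ref{corr:laurent.rcf} the structure $(\Laurent{\Real}{X}, +, \cdot, 0, 1; \leq)$ is a model of $\TORCF$, so the substantive algebraic work is already discharged and what remains is to track how the running time of the elimination procedure depends on the input formula alone.

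First I would invoke Tarski's theorem that $\TORCF$ is complete and admits quantifier elimination~\cite{tarski1998decision}: for every $\LOR$-formula $\varphi$ there is a quantifier-free $\psi$ with $\mathit{FV}(\psi) \subseteq \mathit{FV}(\varphi)$ such that $\TORCF \models (\varphi \Iff \psi)$. The crucial observation is that the algorithm producing $\psi$ is purely syntactic and model-independent; it manipulates the polynomials occurring in $\varphi$ without reference to any particular real closed field. Since $\Laurent{\Real}{X} \models \TORCF$ by Corollary~\ref{corr:laurent.rcf}, the equivalence $\varphi \Iff \psi$ holds in the stream structure, so the validity of $\varphi$ over $\Laurent{\Real}{X}$ is settled by evaluating the quantifier-free $\psi$, exactly as over $\Real$.

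It then remains to insert the complexity estimate. Collins' procedure~\cite{collins1975quantifier} carries out this quantifier elimination in time $d^{2^{O(n)}}$, where $n$ is the total number of variables (free and bound) and $d$ is the product of the degrees of the polynomials appearing in the input. Because both $n$ and $d$ are syntactic data attached to $\varphi$ itself, and because the elimination procedure does not depend on which real closed field interprets the symbols, the same bound governs the decision of $\varphi$ over $\Laurent{\Real}{X}$.

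The only point that is not entirely routine — and hence the step I would flag as the main obstacle — is precisely this transfer of Collins' bound, which is classically stated for the real field $\Real$, to the non-Archimedean stream field. The completeness of $\TORCF$ (all real closed ordered fields are elementarily equivalent) is what licenses the transfer: the syntactic work, and therefore the running time, is identical regardless of the model, while the genuinely nontrivial content, namely that $\Laurent{\Real}{X}$ is real closed, was already established in Corollary~\ref{corr:laurent.rcf}. Since no blowup in $n$ or $d$ is incurred in passing from $\Real$ to the streams, the stated complexity $d^{2^{O(n)}}$ follows.
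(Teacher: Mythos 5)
Your proposal is correct and matches the paper's intended argument: the paper presents this theorem as an immediate consequence of Corollary~\ref{corr:laurent.rcf} (streams form a real closed ordered field), Tarski-style quantifier elimination for $\TORCF$, and Collins' cylindrical algebraic decomposition bound, which is exactly the chain you assemble. Your explicit remark that the transfer to the non-Archimedean stream field is licensed by completeness of $\TORCF$ (the elimination being purely syntactic, with $n$ and $d$ unchanged) is the right justification for the step the paper leaves implicit.
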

\noindent
This worst-case complexity is nearly optimal for quantifier elimination 
for real-closed fields~\cite{davenport1988real}\@. 
For existentially quantified 
conjunctions of literals
of 
the form
    $(\exists x_1, \ldots, x_k)
           \land_{i=1}^n p_i(x_1, \ldots, x_k) \bowtie 0$\@,
where $\bowtie$ stands for either $<$, $=$, or $>$ the 
worst-case complexity is $n^{k+1} \cdot d^{O(k)}$ arithmetic operations 
and polynomial space~\cite{basu1996combinatorial}\@.
% The author is not aware of a detailed complexity analysis of quantifier
% elimination for real closed valuation rings 
% (\cite{cherlin1983real}, Section~2)\@. 
Various implementations of decision procedures for real-closed fields 
are based on virtual term substitution~\cite{weispfenning1997quantifier} 
or on conflict-driven clause learning~\cite{jovanovic2013solving}\@. 
% The author is not aware, however, of a 
% computer implementation of a practical 
% decision  procedure for real closed valuation rings. 

\section{Elementary Stream Calculus}\label{sec:extensions}

We consider definitional extensions of the first-order theory $\TORCR$ of ordered real-closed rings for encoding basic concepts of stream calculus.
Reconsider, for example, the stream circuit in Figure~\ref{fig:example.circuit}\@. 
The construction of a transfer function of this circuit,  as demonstrated in Example~\ref{ex:rational.stream.transformers}, 
is encoded as a first-order formula in the language $\LOR$ of (ordered) rings extended by a constant
symbol $\overline{X}$\@.
\begin{example}
    \begin{align*}
     & (\forall x, y, h_1, h_2, h_3)\, \\
     & ~~~~~~h_1 = \overline{X} \cdot h_2 \And
        h_3 = x + h_1 \And
        h_2 = h_3 \And
        y = x \\
     &~~~~~~~~~~~\Imp y = \overline{\nicefrac{1}{(1 - X)}} \cdot x\mbox{\@,}
    \end{align*}
\end{example}
whereby the logical variables 
% $x$, $y$, $h_1$, $h_2$, $h_3$, and $u$  
are interpreted over streams in $\Power{\Real}{X}$\@. 
To obtain a decision procedure for these kinds of formula, we therefore
    \begin{itemize}
        \item Relativize quantification in $\TORCF$ to formal power series; 
        \item Define constant symbols $\overline{f}$ for rational streams $f$\@. 
    \end{itemize}

\paragraph{Streams.}  
As a direct consequence of Ax's construction~\cite{ax1965undecidability}, 
% (see also,~\cite{anscombe2014existential}, Fact 1.4)\@.
there is a
monadic formula with an $\exists\forall\exists\forall$ quantifier prefix and no parameters for uniformly defining the formal power series $\Power{\Real}{X}$ 
in $\Laurent{\Real}{X}$\@.\footnote{
This observation actually holds for any field of coefficients. 
}
Moreover, $\Power{\Real}{X}$ is 
$\forall\exists$-definable in $\Laurent{\Real}{X}$~(\cite{prestel2015definable}, Theorem~2 together with footnote~2), 
since the valuation ring $\Power{\Real}{X}$ is Henselian\@.
The model-theoretic developments in~\cite{prestel2015definable}, however, do
not provide an explicit definitional formula.
But explicit definitions of valuation rings in valued fields are studied in ~\cite{cluckers2013uniformly,anscombe2014existential,fehm2015existential}\@.
% For our purposes we may conclude that there is a first-order monadic formula in the language of rings for characterizing streams in $\Power{\Real}{X}$\@. 
We therefore have a definition of a monadic predicate $\overline{S}(x)$ for the set of streams in 
$\Power{\Real}{X}$\@.
By relativization of quantifiers with respect to
this predicate $\overline{S}$ we therefore may assume from now on that all logical variables are interpreted over the streams in $\Power{\Real}{X}$\@. 

In addition, we are assuming definitions $\overline{R}(x)$ for given, and possibly finite, subsets $R$ of real number embeddings.  For example, the algebraic definition $(\forall x)\,\overline{\F{2}}(x) \Iff x = x^2$ characterizes the set
$\{[0],[1]\}$\@.

\COMMENT{
\paragraph{Reals.} We have already seen that a stream  $f \in \Power{\Real}{X}$ of the form $[r]$ embeds a real number $r$ as a stream.  
Let $f$ be of the form $(a_0, a_1, a_2, \ldots)$, then its square $f^2$ is of the form
   $$
      (a_0^2, 2a_0a_1, 2a_0a_2 + a_1^2,
            2a_0a_3 + 2a_1a_2, 
            2a_0a_4 + 2a_1a_3 + a_2^2, \ldots)\mbox{\@.}
   $$
In case $a_0 = 0$, we therefore get $f^2= 0$, and identification of $f$ with $f^2$ yields $f = 0$\@.
Otherwise, $a_0 \neq 0$\@. 
By identifying $f$ with $f^2$ and by component-wise comparison, we get (1) $a_0 = a_0^2$ or $-a_0 = a_0^2$, and (2) $a_i = 0$ for $i > 0$\@.
Therefore, the definition\footnote{
$\exists^1$ denotes the "there exists exactly one" quantification.
}
  \begin{align}
  (\forall x)\, \overline{R}(x) \Iff (x = x^2 \lor -x = x^2)\mbox{\@,}
  \end{align}
for $\overline{R}$ a fresh predicate symbol, yields the conservative extension $\TORCR[\overline{R}]$ of the theory $\TORCR$ of real closed rings
in the language $\LOR \cup \{\overline{R}\}$\@. 
As demonstrated above, $\overline{R}(x)$ holds in the structure $\Power{\Real}{X}$ if and only if $x$ is interpreted by a real number embedding $[r]$, for $r \in \Real$\@. 
}

\paragraph{Shift.} The {\em fundamental theorem of stream calculus}~\cite{rutten2019method} states that for every $f \in \Power{\Real}{X}$ there exist
unique $r \in \Real$ and $f' \in \Power{\Real}{X}$ with
       $f = [r] + X \cdot f'$\@.
In this case, $r$ is the {\em head} and $f'$ the {\em tail} of the stream $f$\@. 
Therefore, the definition 
    \begin{align}\label{X}
    (\forall z)\, \overline{X} = z
        \Iff (\forall y)\,(\exists^1 y_0, y')\, 
                  \overline{R}(y_0) \land y = y_0 + z \cdot y'\mbox{\@,}
    \end{align}
for $\overline{X}$ a fresh constant symbol, yields a conservative extension $\TORCR[\overline{S}, \overline{R}, \overline{X}]$ of the theory $\TORCR$, with $X$, as an element of $\Power{\Real}{X}$, the only possible interpretation
for the constant symbol $\overline{X}$\@. 
Notice that the definitional 
formula~(\ref{X}) for $\overline{X}$ requires two quantifier alternations.

%\footnote{
%If theory ${\cal T}$ admits quantifier elimination in the language ${\cal L}$, then it also admits quantifier elimination in the language ${\cal L}[c]$ for a new constant symbol $c$\@. 
% A proof is based on the fact: if the ${\cal L}$-formula $\phi(x, 
% y)$ and $\psi(x, y)$ are equivalent modulo ${\cal T}$, then the
% ${\cal L}[c]$-formulas $\phi(x, c)$ and $\psi(x, c)$ are equivalent modulo ${\cal T}$\@. 
%This can be seen, for instance, by taking an ${\cal %L}[c]$-formula $\phi(x, c)$ and considering
%the ${\cal L}$-formula $\phi(x, y)$ obtained by %replacing the constant $c$ by a variable $y$\@.
%Now, Marker's characterization for quantifier %elimination applies.
%% http://homepages.math.uic.edu/~marker/math502-%F15/qe.pdf
% https://math.stackexchange.com/questions/3102018/if-t-admits-quantifier-elimination-in-mathcall-does-it-admit-quantifier
%   \begin{quote}
% Let ${\cal T}$ be a language with at least one constant symbol, 
% then ${\cal T}$ admits quantifier elimnination if and only if for any two models ${\cal A}$, ${\cal B}$ of ${\cal T}$ with a common substructure ${\cal C}$, we have that for all $L$-formulas $\phi$, and $x \in C^n$, where $C$ is the underlying set of ${\cal C}$, we have that ${\cal A} \models \phi(x)$ if and only if 
% ${\cal A} \models \phi(x)$\@. 
%   \end{quote}
% }
% Obviously, the symbol $\overline{X}$ is interpreted by (the formal power series) $X$\@. 
\begin{example}
The basic stream constructors of stream circuits for addition $A$, multiplication $M_q$ by a rational $q$, and unit delay $D_1$
are defined by (the universal closures of)
   \begin{align*}
    \overline{A}(x_1, x_2) = y 
         &\Iff y = x_1 + x_2 \\
    \overline{M_{\nicefrac{n}{m}}}(x) = y 
         &\Iff m y = n x \\
    \overline{D_1}(x) = y 
         &\Iff y = \overline{X} \cdot x\mbox{\@,}
   \end{align*}
where $\overline{D_1}$, $\overline{A}$, and $\overline{M_{\nicefrac{n}{m}}}$ for $n, m \in \Nat$ with $m \neq 0$, are new function symbols,
and the variables are interpreted over $\Laurent{\Real}{X}$\@.
Synchronous composition of two stream circuits, say $S(x, y)$ and $T(y, z)$, is specified in terms of the quantified conjunction
    $
    \Exists{y}{S(x, y) \And T(y, z)}
    $\@,
where existential quantification is used for {\em hiding} the intermediate $y$ stream~\cite{srivas2005hardware}\@.
\end{example}

\paragraph{Rational Streams.}
We are now extending the language of ordered rings with constant symbols for rational streams.
This extended language is expressive, for example,
for encoding {\em equivalence} of rational stream transformers. 
% \begin{example}
% Let $C_i$, for $i = 1, 2$ be two stream circuits with $C_i(x) = \delta_i \cdot x$, where
% both transfer functions $\delta_i$ are rational streams and the logical variable $x$ is also interpreted over rational streams. 
% Now, $C_1$ may be identified with $C_2$ if the universally quantified formula 
%    $
%   (\forall x)\,\overline{\delta_1} \cdot x = \overline{\delta_2} \cdot x
%    $
% holds, where $\overline{\delta_i}$, for $i = 1, 2$ are new constant symbols, where stream $\delta_i$ is assumed to be their only interpretation.  
%\end{example}
We are considering rational streams
$f = \nicefrac{p(X)}{q(X)}$ with rational coefficients. In particular, the head for $q(X)$ is nonzero and $f \in \Power{\Real}{X}$\@.
% Such a rational function $f$, with $f \neq 0$, is algebraic over $\Rational{\Q}{X}$\@.
 %  \begin{align*}
 %    \nicefrac{q(X)}{p(X)} \cdot f - 1 &= 0\mbox{\@.}
 % \end{align*}
Multiplication by $q(X)$ and by the least common multiple of the denominators of all rational coefficients in $p(X)$ and $q(X)$ yields an equality constraint in the language $\LOR[\overline{S}, \overline{R},\overline{X}]$\@.\footnote{
   $n \Defeq \underbrace{1 + \ldots +1}_{n\text{-times}}$
   for $n \in \Nat$\@. 
}
More precisely, 
let $\overline{\R_{\Q}}$ be a set of fresh constant symbols for all rational streams (except for $X$)
and $\TORCR[\overline{S}, \overline{R}, \overline{X}, \overline{\R_{\Q}}]$ the extension of $\TORCR$ by the definitions
   \begin{align}\label{def:rats}
   (\forall y)\,\overline{f} = y \Iff {\tilde{p}}(\overline{X}) \cdot y = {\tilde{q}}(\overline{X})
   \end{align}
for each (but $X$) rational stream $f$,
${\tilde{p}}(x) \Defeq c p(x)$, and
${\tilde{q}}(x) \Defeq c q(x)$\@, for $c$ the least common multiple of the denominators of coefficients of $p(x)$ and $q(x)$\@; then:  
$\TORCR[\overline{S}, \overline{R}, \overline{X}, \overline{\R_{\Q}}]$ is a conservative extension of $\TORCR$,
and all the symbols $\overline{f} \in \overline{\R_{\Q}}$ have the rational stream interpretation $f$\@.

\begin{remark}
Alternatively, a rational stream $f$ (with rational coefficients) can be finitely represented in terms of linear transformations
$H: \Q^d \to \Q$ and
$G: \Q^d \to \Q^d$, where $d$ is the finite dimension  of the linear span  % $Z_{f}$, 
of the iterated tails of $f$~\cite{rutten2008rational}\@. 
Constraints for the finite number $d$ of linear independent iterated tails are obtained
from the anamorphism $\Ana{H, G}$, which is the unique homomorphism  from the 
coalgebra $\Pairing{H}{G} \in \Q^d \to \Q \Cross \Q^d$ to the corresponding final stream coalgebra\@. 
\end{remark}

\COMMENT{
Rational streams (with real-valued coefficients) 
can be finitely represented in terms of
linear transformations~\cite{rutten2008rational}\@.
We exemplify this construction for one specific stream. 
\begin{example}\label{ex:linear.kth-power}
Let $f = \frac{1}{(1 - X)^2}$\@.
The $i^{\text{th}}$-tail of $f$ is denoted by $f^{(i)}$, for $i \in \Nat$\@. In particular, 
$f^{(0)} = f$, 
$f^{(1)} = \nicefrac{(2 - X)}{(1 - X)^2}$, and 
$f^{(2)} = \nicefrac{(3 - 2X)}{(1 - X)^2}$\@.
Therefore,  
      $\iterTl{f}{2} = -1 \cdot \iterTl{f}{0} + 2 \cdot \iterTl{f}{1}$
and $f^{(2)}$ is linear dependent on $f^{(0)}$ and $f^{(1)}$\@. 
Now the linear space $Z_f$, which is spanned by all $f^{(i)}$, is of finite dimension $d = 2$, and $f^{(0)}$ and $f^{1}$ are a basis of $Z_f$\@. 
The identities $f^{(0)}(0) = 1$ and $f^{(1)}(0) = 2$
for the basis vectors determine 
a matrix $H \in\Real^{2,1}$, and the coefficients of the linear dependency 
determine the second row of a matrix $G \in \Real^{2,2}$\@:
    \begin{align*}
        H &= \begin{pmatrix}
                  1 & 2
                \end{pmatrix}
        & G &= \begin{pmatrix}
                  0 & -1\\
                  1 & 2
                \end{pmatrix}\mbox{\@.}
    \end{align*}
Now, a stream of two-dimensional matrices with rational streams as entries is defined as
    ${\tilde{G}} \Def {\Inverse{(I - (G \cdot X))}} = (I, G, G^2, \ldots)$\@.\footnote{$I$ is the identity matrix, and operations such as $+$ and $\cdot$ are lifted to matrices in the usual way.
    The inverse of a stream of matrices exists if and only if the head matrix is invertible.
}
Therefore, 
     \begin{align*}
        {\tilde{G}}       
                      = \Inverse{\begin{pmatrix}
                             1  & X\\
                             -X & (1 - 2X)
                         \end{pmatrix}}
                      = \frac{1}{(1 - X)^2} \cdot 
                            \begin{pmatrix}
                               (1 - 2X)  & -X\\
                               X & 1
                             \end{pmatrix}\mbox{\@,}
    \end{align*}
and we obtain $f = f^{(0)} = T(1, 0)$ and $f^{(1)} = T(0, 1)$ for the linear transformation
%$T$ : \Real^2 \to \Rational{\Real}{X}$, which 
% is defined as
  \begin{align*}
       T \begin{pmatrix} x \\ y \end{pmatrix} &\Defeq
             (H \cdot {\tilde{G}})  \cdot \begin{pmatrix} x \\ y \end{pmatrix} \mbox{\@.}
    \end{align*}
From these equations we get the characterization of $f$ (respectively, $f^{(1)}$) 
as the unique interpretation in $\Power{\Real}{X}$ of the logical variable $z$ (respectively, $z'$) satisfying the algebraic $\LOR[\overline{X}]$-constraints 
   \begin{align*}
    (1 - \overline{X})^2 \cdot z  &= 1\\
    (1 - \overline{X})^2 \cdot z'  &= 2 - \overline{X}
    \mbox{\@.}
    \end{align*}
\end{example}
These identities are self-evident, but
the underlying construction is expressive in finitely representing any rational stream $f \in \Power{\Real}{X}$ by means of linear transformations $H: \Real^d \to \Real$ and
$G: \Real^d \to \Real^d$, with $d$ the finite dimension  of $Z_{f}$~\cite{rutten2008rational}\@. 
In general, $T$ in the Example~\ref{ex:linear.kth-power} above is defined as the anamorphism 
$\Ana{H, G}$, which is the unique homomorphism  from the 
coalgebra $\Pairing{H}{G} \in \Real^d \to \Real \Cross \Real^d$ to the 
corresponding final stream coalgebra~\cite{rutten2008rational}\@. 
From this construction one reads-off, as
exemplified above, characteristic algebraic constraints over $d$ variables, interpreted over $\Power{\Real}{X}$, for the streams $f^{0}$ through $f^{(d-1)}$\@.

% Let $\overline{\R}$ (including $\overline{X}$) 
% be a set of fresh constant symbols for each 
% rational stream, and $\TORCF[\overline{X},
% \overline{\R}]$ the theory extended with the finite % characterizations for rational streams as outlined % above, then we get that the theory
% $\TORCF[\overline{\R}]$ is a conservative extension
% of $\TORCF[\overline{X}]$ and therefore also of 
% $\TORCF$\@. 
% Now, quantification in $\TORCF$ may be relativized over sets of rational streams $f$ of a fixed dimension $Z_f$\@. 
}

\paragraph{Automatic Streams.}
We exemplify the encoding of a certain class of regular 
streams as (semi-)algebraic constraints in stream logic.
Consider the {\em Prouhet-Thue-Morse}~\cite{allouche2003automatic} stream $\mathit{ptm} \in \Power{\F{2}}{X}$,
for $\F{2}$ the finite field of characteristic $2$\@.
The $n^{\mathit{th}}$-coefficient of this stream is $1$ if and only 
if the number of $1$'s in the $2$-adic representation $[n]_2$ 
of $n$ is even. 
In other words, the $n^{\mathit{th}}$-coefficient is $1$ if and only if $[n]_2$ is in $0^*(10^*10^*)^*$\@. 
Such a stream is also said to be {\em automatic}~\cite{allouche2003automatic}\@.
Now, Christol's characterization~\cite{christol1980suites} of algebraic (over the rational functions with coefficients from a finite field)  power series in terms of deterministic finite automata with two states ("odd number of 1s", "even number of 1s")
implies that the stream $\mathit{ptm}$ is algebraic over $\Poly{\F{2}}{X}$\@.
In particular, the stream ${\mathit ptm}$ can be shown to be a root of the polynomial
   % \begin{align*}
   % p_{\mathit{ptm}}(Y) &\Defeq 
   $X + (1 + X^2) \cdot Y + (1 + X)^3 \cdot Y^2$
   % \end{align*}
of degree $2$  and its coefficients are in $\Poly{\F{2}}{X}$\@. 
% The corresponding {\em Sturm sequence} (in $\Poly{\Rational{\Real}{X}}{Y}$) is
 %   \begin{align*}
 %   P_0(Y) &= (1 + X)^3 \cdot Y^2 + (1 + X^2) \cdot Y + X \\
 %   P_1(Y) &= 2 (1 + X)^2 \cdot Y + (1 + X^2) \\
%  P_2(Y) &= \ldots \\
 %   P_3(Y) &= 0\mbox{\@.}
 %   \end{align*}
 % Evaluation of this sequence for $Y = 0$ and $Y = \infty$ yields the sign sequences $\ldots$ and $\ldots$, respectively.
 A semi-algebraic constraint for ruling out other than the intended solution(s) may be read-off, say, from of a Sturm chain. 
% Consequently, $\mathit{ptm}$ is definable by a semi-algebraic constraint.
% \begin{lemma}
% $\TORCF[\overline{X}, \overline{ptm}]$ which is 
% $\TORCF[\overline{X}]$ together with the definition
%    \begin{align}\label{char.tm}
%    \Forall{y}{\overline{ptm} = y \Iff
%                         y \geq 0 \And (1 + \overline{X})^3 \cdot y^2 + (1 + \overline{X}^2) \cdot y + \overline{X} = 0
%                  }% \mbox{\@}
%    \end{align}
% is a conservative extension of 
% $\TORCF[\overline{X}]$ that 
% $\overline{ptm} = y$ holds if and only % if $y$ is interpreted by 
% $\mathit{ptm}$\@.
% \end{lemma}

In this way, Christol's theorem supports the logical definition in stream logic of all kinds of analytic functions ($\sin$, $\cos$, \ldots) over finite fields. 
But not over the reals, as otherwise we could define the natural numbers using expressions such as $\sin{(\pi x)} = 0$\@.
% , which holds if and only if $x$ 
% is a natural number.
And we could therefore encode 
undecidable identify problems 
over certain classes of analytic functions~\cite{richardson1994identity,laczkovich2003removal}\@. 
% We are also missing succinct definitions for the subset of embedded formal power series and for the embedding of real numbers. 

\paragraph{Heads and Tails.}
Based on the {\em fundamental law of stream calculus} for formal power series we define operators for stream projection and stream consing. 
% Therefore, logical variables are interpreted in $\Power{\Real}{X}$, and $\Constp{x} \Defeq (x^2 - x = 0)$ holds for $x$ an embedding of a real number\@.  
Now, the theory $\TORCR[\overline{S}, \overline{R}, \overline{X}, \overline{hd}, \overline{tl}, \overline{cons}]$
 with the new (compared with $\TORCR[\overline{S}, \overline{R}, \overline{X}]$) definitional axioms
      \begin{align}
     (\forall x, x')\, \overline{tl}(x) = x' &\Iff (\exists x_0)\, \overline{R}(x_0) \land x = x_0 + \overline{X} \cdot x' \\
     (\forall x, x_0)\, \overline{hd}(x) = x_0 &\Iff \overline{R}(x_0) \land (\exists x')\, x = x_0 + \overline{X} \cdot x' \\
     (\forall x_0, x', y)\, \overline{cons}(x_0, x') = y &\Iff \overline{R}(x_0) \land y = x_0 + \overline{X} \cdot x'
      \end{align}
 is a conservative extension of $\TORCR$\@.
 Moreover, $\overline{hd}(x) = y$ ($\overline{tl}(x) = y$) holds in the structure $\Power{\Real}{X}$ if and only if $y$ is interpreted by the head (tail) of the interpretation of $x$\@; similarly for consing.
 
 With these definitions we may now also express corecursive identities in a decidable first-order equality theory.
The following example codifies the Fibonacci recurrence in our (extended) decidable logic.
\begin{example}
    \begin{align*}
        \overline{\mathit{hd}}(x) &= 0 \\
        \overline{\mathit{hd}}(\overline{\mathit{tl}}(x)) &= 1 \\
         \overline{\mathit{tl}}^2(x) 
           - \overline{\mathit{tl}}(x) - x   &= 0\mbox{\@.}
    \end{align*}
\end{example}
These {\em behavioral stream equations} 
are ubiquitous in stream calculus~\cite{rutten2019method}, for example, for specifying filter circuits. 
Filters in signal processing are usually specified as difference equations.
\begin{example}[$3$-$2$-filter] ~\label{example:3-2-filter}
A $3$-$2$-filter with input stream $x$ and output $y$ is specified in stream logic by three initial conditions and  the difference equation
   \begin{align*}
      \overline{hd}(y) &= 0 \\
     \overline{hd}(\overline{tl}(y)) &= 0\\
        \overline{hd}(\overline{tl}^2(y)) &= 0 \\
     \overline{tl}^3(y) &= c_0 x + 
             c_1 \overline{tl}(x) + \overline{tl}^3(x) + 
             c_2 c_3 \overline{tl}^2(y) + 
             c_4 \overline{tl}(y)\mbox{\@,}
   \end{align*}
for constants $c_0, \ldots, c_4 \in \Int$\@.
% \begin{align*} 
% \end{align*}
\end{example}

\begin{example}[Timing Diagrams]
The rising edge stream is specified in Scade-like~\cite{colacco2017scade} programming notation by means of the combined equation
      $$y = (0 \rightarrow x \And \lnot \mathit{pre}(x)\mbox{\@.}$$
  That is, the head of $y$ is $0$ and the tail of $y$ is specified by the expression to the right of the arrow. Hereby, $\mathit{pre}(x)$
  is similar to the shift operation in that
  $\mathit{pre}(x) = (\bot, x_0, x_1, \ldots)$, where $\bot$ indicates that the head element is undefined. 
  The rising edge stream is specified corecursively in stream logic by the identities
     \begin{align*}
        \overline{hd}(y) &= 0 \\
        \overline{tl}(y) &= \overline{\mathit{and}}(x, \overline{\mathit{not}}(\overline{tl}(x)))\mbox{\@,}
     \end{align*}
  for an arithmetic encoding of the (componentwise) logical stream operators 
  $\mathit{and}$ and $\mathit{not}$\@.
  \end{example}

\COMMENT{
\begin{example}[Timing Diagram]
  A binary signal is an element of $\Power{\F{2}}{X}$\@.  From a given binary signal $x$ we specify a binary signal $y$ which
  is $1$ iff there is a rising edge in $x$\@. 
  This relation is specified co-recursively by
     \begin{align*}
        \overline{hd}(x) &= 0 \\
        \overline{tl}(x) &= x \Overline{\And} \overline{\lnot} \overline{tl}(x)\mbox{\@,}
     \end{align*}
  where the characteristic predicate for binary signals and signal operators such as $\And$, $\lnot$ are defined corecursively in the obvious way.
  The rising edge stream is specified in Scade-like~\cite{colacco2017scade} programming notation by means of the combined equation
      $$y = 0 \rightarrow x \And \lnot \mathit{pre}(x)\mbox{\@.}$$
  That is, the head of $y$ is $0$ and the tail of $y$ is specified by the expression to the right of the arrow. Hereby, $\mathit{pre}(x)$
  is similar to the shift operation in that
  $\mathit{pre}(x) = (\bot, x_0, x_1, \ldots)$, where $\bot$ indicates that the head element is undefined.  
  \end{example}
}
Finally, the decision procedure for stream logic may be used in {\em coinduction} proofs for deciding whether or not a given binary stream relation is a bisimulation. 
\begin{example}[Bisimulation]~\label{ex:bisim}
A binary relation $B$ on streams, expressed as a formula in stream logic whith two free variables, is a {\em bisimulation}~\cite{rutten2019method} if and only if the 
$\LOR[\overline{S}, \overline{R}, \overline{X}, \overline{hd}, \overline{tl}]$ formula
   \begin{align}
   (\forall x, y)\, B(x, y) 
     \Implies \overline{hd}(x) = \overline{hd}(y)
     \And B(\overline{tl}(x), \overline{tl}(y))
   \end{align}
holds in the structure $\Power{\Real}{X}$ of streams. 
\end{example}

\COMMENT{
\begin{example}[Stream Map]\label{ex:map}
Let $\varphi$ be a given transformation on constant streams and $\overline{M_\varphi}$ a fresh function symbol\@; in particular, $\overline{R}(\varphi(x))$ whenever $\overline{R}(x)$\@. 
Then mapping this transformation to streams satisfies the corecursive identities
    \begin{align*}
          \overline{\mathit{hd}}(\overline{M_\varphi}(x))
          &= \varphi(\overline{hd}(x)) \\
          \overline{\mathit{tl}}(\overline{M_\varphi}(x))
          &= \overline{M_\varphi}(\overline{tl}(x))\mbox{\@.}
    \end{align*}
\end{example}
% With the exception of stream shifting, which uses two quantifier alternations, the encodings so far yield rather simple algebraic or semi-algebraic constraints, and they do not actually stress the limits on the expressive power of stream logic.

\begin{example}[Stream Zip]~\label{ex:zip}
The function $\mathit{Z}$ for zipping the elements of two streams is defined using the corecursive identities
     \begin{align*}
     \overline{hd}(\overline{Z}(x, y)) &= \overline{hd}(x) \\
     \overline{tl}(\overline{Z}(x, y)) &= \overline{Z}(y, \overline{tl}(x))\mbox{\@.}
     \end{align*}
\end{example}
\begin{example}[Coinduction]
Now, with the stream map $\overline{M_\varphi}$ as in Example~\ref{ex:map} and $\overline{Z}$ as in Example~\ref{ex:zip}, we may simply prompt the decision procedure for stream logic for establishing equalities such as
   \begin{align*}
   (\forall x, y)\,
      \overline{M_\varphi}(\overline{Z}(x, y)) =
          \overline{Z}(\overline{M_\varphi}(x), 
                       \overline{M_\varphi}(y))\mbox{\@.}
   \end{align*}
These kinds of stream equalities are usually determined by coinduction, which
requires the construction of a suitable bisimulation relation in each individual case.
\end{example}
}

% These examples demonstrate how central 
% concepts  of stream calculus are encoded in 
% decidable stream logic.

\COMMENT{
    hd(map(f)(x)) = f(hd(x))
    tl(map(f)(x)) = map(f)(tl(x))

F: R^w -> R x R^w

Final Coalgebra
〈hd, tl〉 : R^w → R × R^w

Assume an arbitrary F co-algebra 
   (U, 〈val, next〉 : U → A × U )

e.g. F_map = 
   
Define
h : U → A^w as
     h(u)(n) := = val(next^n(u))
for u in U and n in Nat. 
Then indeed, 
    hd ◦ h = val
    tl ◦ h = h ◦ next, 
making
h an F-homomorphism. It is easily checked that h is unique in satisfying these two equations
    h'(u) = (1, next(u), next^2(u), \ldots)
          =  (1 / (1 - X * next(u))

    h = map(val)(h')

      = (val(u), val(next(u)), val(next^2(u)), ...)

    ?????

    hd(h(u)) = val(hd(h'(u))) = val(u)
    tl(h(u)) = tl(h'(u))
}

\section{Conclusions}\label{sec:conclusions}

First-order stream logic is expressive for encoding
elementary problems of stream calculus. 
It is decidable in doubly exponential time, and its decision procedure essentially is based on quantifier elimination for the theory of real closed ordered fields.
Some of the suggested encodings (relativization of quantifiers and the shift operator add new quantifier alternations), however, have the potential to substantially increase computation times.
It therefore still remains to be seen if and how exactly a quantifier 
elimination-based decision procedures for stream logic demonstrates practical 
advances, say, compared to the non-elementary
decision procedures for $MSO(\omega)$~\cite{klarlund2002mona,owre2000integrating}\@.
The goal hereby is to improve the quantifier elimination procedure specifically for stream logic encodings
as the basis for enlarging the scope of its practical applicability. 
It should also be interesting to characterize the set of definable corecursive stream functions in stream logic.  

The decision procedure for first-order stream logic may alternatively be based directly, that is, without relativization of the first-order quantifiers, on quantifier elimination for real closed valuation rings.
But these algorithms have not been studied and explored quite as well as quantifier elimination for real closed fields, and we are not aware of any computer implementation thereof. 

% ---- Bibliography ----
%
% BibTeX users should specify bibliography style 'splncs04'.
% References will then be sorted and formatted in the correct style.
%
\bibliographystyle{splncs04}
\bibliography{stream}
%

% http://homepages.math.uic.edu/~marker/math502-F15/qe.pdf

\newpage
\appendix
\section{Orderable Fields}\label{app:orderable}

A field $\K$ is {\em orderable} if there exists a non-empty
$\PosCone{\K} \subset \K$ such that
\begin{enumerate}
    \item $0 \notin \PosCone{\K}$
    \item $(x + y), xy\in \PosCone{\K}$ for all $x, y \in \PosCone{\K}$
    \item Either $x \in \PosCone{\K}$ or $-x \in \PosCone{\K}$ for all $x \in \K \setminus \{0 \}$
\end{enumerate}
Provided that $\K$ is orderable we can generate a strict order on $\K$ by
$x < y$ if and only if $(y - x) \in \PosCone{\K}$\@.
Furthermore, a total ordering $\leq$ on $\K$ is defined by
$x \leq y$ if and only if $x < y$ or $x = y$,
and $(\K, \leq)$ is said to be a {\em (totally) ordered field}\@. 
Now, the {\em absolute value} of $x \in \K$ is defined 
by $|x| \Def \Max{-x}{x}$\@. 
The {\em triangle inequality}
    \begin{align}
        |x + y| &\leq |x| + |y|
    \end{align}
holds for ordered fields. 
As $-|x|-|y| \leq x + y \leq |x|+|y|$,
we have $|x+y|\leq |x|+|y|$, 
because $x+y\leq |x|+|y|$ and $-(x+y)\leq |x|+|y|$\@. 

Let $\K$ be an ordered field and $a \in \K \setminus \{0\}$ fixed.
The scaled identity function $I_a(x) \Defeq ax$ is
uniformly continuous in the order topology of $\K$.
For given $\varepsilon \in \PosCone{\K}$, let $\delta \Defeq \nicefrac{\varepsilon}{|a|}$\@.
Indeed, for all $x, y \in \K$, $|x-y| < \delta$ implies
$|ax - ay| = |a|\,|x - y| < |a|\delta = \varepsilon$\@.
Consequently, every polynomial in $\K$ is continuous.

A field $\K$ is orderable iff it is {\em formally real} (see~\cite{van1966heidelberger}, Chapter 11), that is,
$-1$ is not the sum of squares, or alternatively,
the equation $x_0^2 + \ldots + x_n^2 = 0$ has only trivial (that is, $x_k = 0$ for each $k$) solutions in $\K$\@.

\section{Real-Closed Fields}\label{app:rcf}

A field $\K$ is a {\em real closed field} if it satisfies the following.
  \begin{enumerate}
  \item $\K$ is formally real (or orderable).
  \item For all $x \in \K$ there exists $y \in \K$ such that
        $x = y^2$ or $x = -y^2$\@.
  \item For all polynomial $P \in \K[t]$ (over the single indeterminate $t$) with odd degree there exists $x \in \K$ such that $P(x) = 0$\@. 
  \end{enumerate}
Alternatively, a field $\K$ is {\em real closed} if $\K$ is 
formally real, but has no formally real proper algebraic extension field.

Let $\K$ be a real closed totally ordered field and $x \in \K$\@. 
Then $x > 0$ iff $x = y^2$ for some $y \in \K$\@.
Suppose $x > 0$, then, by definition of real-closedness, 
there exists $y \in \K$ such that $x = y^2$\@.
Conversely, suppose $x = y^2$ for some $y \in \K$, 
then, by the definition of $\PosCone{\K}$, we have $y^2 \in \PosCone{\K}$
for all $y \in \K$, and therefore $x > 0$\@.
Thus every real closed field is ordered in a unique way. 

% Let $\K$ be an ordered field. Then $\K$ has the {\em intermediate value property} (IVP) if it satisfies the following.
 %   \begin{quote}
 %   For all $f: H \to \K$ which is continuous in the order topology of $\K$, for all
 %   $[a, b] \subseteq H$ with $a, b \in \K$,
 %   and for all $u \in \K$ such that $f(a) \leq u \leq f(b)$
 %   or $f(a) \leq u \leq f(b)$,
 %   there exists a $c \in [a, b]$ such that $f(c) = u$\@.
  %  \end{quote}
% Every totally ordered field $\K$ with the IVP is real-closed 
% (proof is analogous to, for example, \cite{hall2011completeness}, Theorem 1.16)\@.
% To show, for example, that every odd degree polynomial $P(t) \in \K[t]$ has a root,
% we observe that $\lim_{t\to{-\infty}} P(t) = -\lim_{t\to{+\infty}} P(t)$\@. 
% Since $P$ is continuous in the order topology of $\K$, IVP applies to obtain $c \in \K$ such that $P(c) = 0$\@. 

Artin and Schreier’s theorem gives us two equivalent
conditions for a field $\K$ to be real closed: 
for a field $\K$ , the following are equivalent
   \begin{enumerate}
       \item $\K$ is real closed.
       \item $\K^2$ is a positive cone of $\K$ and every polynomial of odd degree has a root in $\K$\@.
       \item $\K(i)$ is algebraically closed and $\K \neq \K(i)$ (where $i$ denotes $\sqrt{-1}$)\@.
   \end{enumerate}
This characterization provides the basis (see axioms~\ref{ax:sqrt}) and~\ref{ax:root} below) for a 
first-order axiomatization of (ordered) real-closed fields.
The language of ordered rings (and fields),
$\LOR \Defeq \langle \leq \rangle; +, \cdot, -, 0, 1$
consists of a binary relation symbols $\leq$, 
two binary operator symbols, $+$, $\cdot$, 
one unary operator symbol $-$,
and two constant symbols $0$, $1$\@.
Now, the first-order theory $\TORCF$ of ordered real-closed fields consists of
all $\LOR$-structures $M$ satisfying the following set of axioms. 

\paragraph{Field Axioms.}
    \begin{enumerate}
    \item $(\forall x, y, z)\, x \cdot (y + z) = x \cdot y + x \cdot z$
    \item $(\forall x, y, z)\, x + (y + z) = (x + y) + z$
    \item $(\forall x, y, z)\, x \cdot (y \cdot z) = (x \cdot y) \cdot z$
    \item $(\forall x, y)\, x + y = y + x$
    \item $(\forall x, y)\, x \cdot y = y \cdot x$
    \item $(\forall x)\, x + 0 = x$ 
    \item $(\forall x)\, x + (-x) = 0$
    \item $(\forall x)\, x \cdot 1 = x$
    \item $(\forall x)\, x \neq 0 \Imp (\exists y)\, x \cdot y = 1$
    \end{enumerate}
\paragraph{Total Ordering Axioms.}
    \begin{enumerate}[resume]
    \item $(\forall x)\, x \leq x$
    \item $(\forall x, y, z)\, x \leq y \And y \leq z \Imp x \leq z$
    \item $(\forall x, y)\, x \leq y \And y \leq x \Imp x = y$
    \item $(\forall x, y)\, x \leq y \Or y \leq x$
    \item $(\forall x, y, z)\, x \leq y \Imp x + z \leq y + z$
    \item $(\forall x, y)\, 0 \leq x \And 0 \leq y \Imp 0 \leq x \cdot y$
    \end{enumerate}
\paragraph{Existence of Square Root.}
    \begin{enumerate}[resume]
    \item \label{ax:sqrt}
       $(\forall x)(\exists y)\, y \cdot y = x \Or y \cdot y = -x$
    \end{enumerate}
\paragraph{Every polynomial of odd degree has a root.}
    \begin{enumerate}[resume]
    \item \label{ax:root}
          $(\forall a_0, \ldots, a_n)\, a_n \neq 0 \Imp
                  (\exists x)\, a_0 + a_1 \cdot x + \ldots + a_n \cdot x^n = 0$ for odd $n \in \Nat$
    \end{enumerate}
If an $\LOR$-structure $M$ satisfies the axioms for ordered real-closed fields above, then $M$ is called a {\em model} of $\TORCF$\@. 
Any model of $\TORCF$ is {\em elementarily equivalent} to the real numbers. 
In other words, it has the same first-order properties as the field of ordered reals.

\COMMENT{

\section{Christol's Theorem}\label{app:christol}

Let $K \subseteq L$ be an inclusion of fields.
An element $x \in L$ is {\em algebraic over $K$} 
% (or integral over $K$)
if there exists a monic polynomial $p(X) \in \Poly{K}{X}$
such that $P(x) = 0$\@. 
For example, $i \in \mathbb{C}$ is algebraic over
$\mathbb{Q}$\@. 

$x \in L$ is algebraic over $K$ if and only if all powers of $x$ lie in a finite-dimensional $K$-subspace of $L$\@. 
As a consequence, the set of $x \in L$ which are algebraic over $K$ is a subfield of $L$\@. 

Can one give an explicit description of those elements of $\Laurent{\K}{X}$ which are
algebraic over $\Rational{\K}{X}$?
When $K$ is a finite field Christol's theorem~\cite{christol1980suites} provides an affirmative answer in terms of combinatorics on words. 

In what follows we assume $p$ to be prime and $q$ an exponent of $p$.
In particular, the finite field $\F{q}$ with $q$ elements has characteristic $p$\@. 
Therefore $(x + y)^p = x^p + y^p$ for all $x, y \in \F{q}$, and the {\em Frobenius map} $x \mapsto x^p$ is a field automorphism $\F{q} \to \F{q}$\@.  
We also make use of the fact that the $p$-th power map also induces a Frobenius endomorphism on 
$\Rational{\F{q}}{X}$ and on $\Laurent{\F{q}}{X}$\@.
These maps are injective but not surjective: 
an element of
$\Rational{\F{q}}{X}$ (respectively, $\Laurent{\F{q}}{X}$) is a $p$-th power 
if and only if it is a rational function (respectively, 
formal Laurent series) in $X^p$\@.

If $f \in \Laurent{\F{q}}{X}$ is a root of a monic polynomial $p(Y) \in \Poly{\Rational{\F{q}}{X}}{Y}$ of degree $d$, then every power of $f$ belongs to the
$\Rational{\F{q}}{X}$-linear span of $1, f, \ldots, f^{d-1}$\@. 
Conversely, if the latter inclusion holds, then any linear dependence among $f$, $f^p$, $f^{p^2}$, $\ldots$, gives rise to a polynomial over $\Rational{\F{q}}{X}$ having $f$ as its root.
In other words:
\begin{proposition}[Ore]\label{prop:ore}
$f \in \Laurent{\F{q}}{X}$ is algebraic over $\Rational{\F{q}}{X}$ if and only if 
$f$, $f^p$, $f^{p^2}$, $\ldots$ all belong to a finite dimensional $\Rational{\F{q}}{X}$-subspace of
$\Laurent{\F{q}}{X}$\@.
\end{proposition}

Fix a nonempty, finite set $\Sigma$ as the alphabet.
As usual, let $\Sigma^*$ denote the set of finite words on $\Sigma$, and
a {\em language} on $\Sigma$ is a subset $L$ of $\Sigma^*$\@.
We write the juxtaposition $xy$ for the concatenation of the words $x$ and $y$\@.

Fix the alphabet $\Sigma = \{0, \ldots, p-1\}$\@.
We may identify $n \in \Nat$ with words on $\Sigma$ using base-$p$ expansion $[n]_p$\@. We will allow arbitrary trailing zeroes.

For $f \Defeq \sum_{k=-n}^\infty f_k X^k \in \Laurent{\F{q}}{X}$\@,
we identify $f$ with a function $f: \Sigma^* \to \F{q}$
taking a base-$p$ expansion $[n]_p \in \Sigma^*$ of $n$\@.
With this identification,
$f \in \Laurent{\F{q}}{X}$ is said to be {\em $q$-automatic} if the corresponding function $f: \Sigma^* \to \F{q}$ is {\em regular}\@.
That is, there is a finite deterministic automaton with output $\Delta = (S, s_0, \delta, h)$
with {\em initial} state $s_0 \in S$, {\em transition} $\delta: S \Cross \Sigma \to \S$, and {\em output} $\tau: S \to \F{q}$ 
such that $f  = h \circ \delta_0^*$\@, where
   \begin{align*}
       \delta_0^*(\varepsilon) &= s_0 \\
       \delta_0^*(s, qw) &= \delta_0^*(\delta(s, q),w), w)\mbox{\@.}
   \end{align*}
In these cases, we say that $\Delta$ accepts $f$\@. 

\begin{theorem}[Christol~\cite{christol1979ensembles,christol1980suites}]
A formal Laurent series in $\Laurent{\F{q}}{X}$ is algebraic over $\Rational{\F{q}}{X}$ if and only if it is $q$-automatic.
\end{theorem}
We only demonstrate that $1$-automaticity implies algebraicity.
Let $f \in \Laurent{\F{q}}{X}$ be $q$-automatic.
Choose a finite deterministic automata $\Delta \Defeq (S, s_0, \delta, h)$ with output $h: S \to \F{q}$
such that $f(w) = h(\delta^*(s_0, w))$\@.
For $s \in S$, define
    \begin{align*}
        f_s(X) &\Defeq \sum_{n = 0}^{\infty} \chi_{\delta_0^*([n]_q) = s} X^n\mbox{\@.}
    \end{align*}
Then,
    \begin{align*}
        f(X) &\Defeq \sum_{s \in S} h(s) f_s(X)\mbox{\@,}
    \end{align*}
and it suffices to check that the $f_s$ are algebraic. 
If the state $t_i$ transitions to $s$ under the letter $a_i$, for $1 \leq i \leq k$, then:
    \begin{align*}
        f_s(X) &= \sum_{i = 1}^k X^{a_i} f_{t_i}(X^q) \\
               &= \sum_{i = 1}^k X^{a_i} f_{t_i}(X)^q\mbox{\@.}
    \end{align*}
(let $[n]_q = i[m]_q$ for $i \in \{0, \ldots, q-1\}$, then $n = i + qm$ and $X^n = X^{i} X^{qm}$)\@.
Now, 
     \begin{align}
       f_s & \in \Span{f_{s_1}^{q}, \ldots, f_{s_n}^{q}} \\ 
       f_s, f_s^{q} &\in \Span{f_{s_1}^{q^2}, \ldots, f_{s_n}^{q^2}} \\ 
                    &\ldots \\
     f_s, f_s^{q}, f_s^{q^2}, \ldots f_s^{q^d}, &\in \Span{f_{s_1}^{q^{d+1}}, \ldots, f_{s_n}^{q^{d+1}}}
     \end{align}
By Proposition~\ref{prop:ore}, $f_s$ is algebraic over $\Rational{\F{q}}{X}$ for all $s \in S$\@.
Hence,
   \begin{align}
       f(X) & \sum_{s \in S} h(s) f_s(X)
   \end{align}
is algebraic over $\Rational{\F{q}}{X}$\@.

\begin{example}[\cite{allouche2003automatic}, Theorem 12.2.6]
If $f, g \in \Laurent{\F{q}}{X}$ is algebraic then so is
the Hadamard product of $f$ and $g$, which is obtained by componentwise multiplication of the coefficients $f_n$ and $g_n$ for $n \in \Nat$\@. This follows directly from Christol's theorem, since the analogous assertion for automatic sequences can easily be checked. 
\end{example}

}

\end{document}